\documentclass[a4paper,twocolumn,accepted=2021-11-22,amsfonts,amssymb,amsmath]{quantumarticle}
\pdfoutput=1

\usepackage[utf8]{inputenc}
\usepackage[english]{babel}
\usepackage[T1]{fontenc}

\usepackage{cancel}

\usepackage{bbold}

\newcommand{\id}[0]{\mathbb{1}}
\newcommand{\trans}[0]{^{\rm T}}
\newcommand{\reals}{\mathbb{R}}

\newcommand{\Csa}[1]{\mathbf{H}_{#1}(\mathbb{C})}
\newcommand{\Csap}[1]{\mathbf{H}_{#1}^{+}(\mathbb{C})}

\DeclareMathOperator{\conv}{conv}
\DeclareMathOperator{\diag}{diag}
\DeclareMathOperator{\tr}{tr}

\usepackage{amsthm}
\newtheorem{theorem}{Theorem}
\newtheorem{lemma}[theorem]{Lemma}

\newtheorem{definition}{Definition}
\newtheorem{example}{Example}

\usepackage{enumerate}

\usepackage{soul}
\usepackage{braket}
\usepackage{framed}
\usepackage{graphicx}
\newcommand{\caphead}[1]{\textbf{#1}}

\usepackage{hyperref}
\usepackage[capitalise]{cleveref}
\usepackage[numbers,sort&compress]{natbib}

\usepackage{csquotes}

\usepackage[symbol]{footmisc}

\usepackage{xcolor}
\definecolor{shadecolor}{rgb}{0.9,0.9,0.9}

\hyphenpenalty=750
\widowpenalty=10000

\newcommand{\IQOQI}{Institute for Quantum Optics and Quantum Information, Austrian Academy of Sciences, Boltzmanngasse 3, A-1090 Vienna, Austria}
\newcommand{\Peri}{Perimeter Institute for Theoretical Physics, 31 Caroline Street North, Waterloo, ON N2L 2Y5, Canada}

\begin{document}
\title{Quantum Darwinism and the spreading of classical information in non-classical theories}
\author{Roberto D.\ Baldij\~ao$^{*,}$}
\affiliation{Instituto de F\'isica Gleb Wataghin, Universidade Estadual de Campinas, Campinas, SP 13083-859, Brazil}
\affiliation{\IQOQI{}}
\orcid{0000-0002-4248-2015}
\author{Marius Krumm$^{*,}$}
\affiliation{\IQOQI{}}
\affiliation{Faculty of Physics, University of Vienna, Boltzmanngasse 5, A-1090 Vienna, Austria}
\author{Andrew J.\ P.\ Garner}
\affiliation{\IQOQI{}}
\orcid{0000-0002-3747-9997}
\author{Markus P.\ M\"uller}
\affiliation{\IQOQI{}}
\affiliation{Vienna Center for Quantum Science and Technology (VCQ), Faculty of Physics, University of Vienna, Vienna, Austria}
\affiliation{\Peri{}}
\orcid{0000-0002-8086-5586}

\begin{abstract}
Quantum Darwinism posits that the emergence of a classical reality relies on the spreading of classical information from a quantum system to many parts of its environment. 
But what are the essential physical principles of quantum theory that make this mechanism possible?
We address this question by formulating the simplest instance of Darwinism -- CNOT-like {\em fan-out} interactions -- in a class of probabilistic theories that contain classical and quantum theory as special cases. 
We determine necessary and sufficient conditions for any theory to admit such interactions. 
We find that every theory with non-classical features that admits this idealized spreading of classical information must have both entangled states and entangled measurements. 
Furthermore, we show that Spekkens' toy theory admits this form of Darwinism, and so do all probabilistic theories that satisfy principles like strong symmetry, or contain a certain type of decoherence processes. 
Our result suggests the counter-intuitive general principle that in the presence of local non-classicality, a classical world can only emerge if this non-classicality can be \mbox{``amplified''} to a form of entanglement.
\end{abstract}
\maketitle

\section{Introduction}
\label{sec: Intro}

\renewcommand*{\thefootnote}{\fnsymbol{footnote}}
\footnotetext{\textbf{These authors contributed equally to this work.}}
\renewcommand*{\thefootnote}{\arabic{footnote}}
\setcounter{footnote}{0}

{\em Quantum Darwinism}~\cite{Zurek_QuantumOrigins,Zurek_QDReview1,Zurek_QDReview2,Brandao_QD,Knott_QDInfiniteDimension,Horodecki_NoMI,Le_ObjectivityQDxSSB,Blume-Kohout_QDBrownian,Blume-Kohout_RedundantInfo,Riedel_QDIllumination,Zwolak_NoisyChannels,Zwolak_MixedEnvironment} addresses one of the toughest questions raised by quantum theory:
 If the universe is fundamentally described by quantum mechanics, how does an objective classical world arise?
At the heart of this question is a tension between the microscopic quantum realm, in which systems happily exist in states of super-imposed possibility,
   and the macroscopic world of ``classical'' systems (such as the pointer needle of a read-out gauge), which are only ever observed in definite objective states.
Several mechanisms and formalisms have been proposed which intend to provide a bridge between the quantum and classical realms, including the formal limit of $\hbar\to 0$~\cite{DiracBook}, saddle point approximations to the path integral~\cite{Sakurai}, and the process of environment-induced decoherence~\cite{Zurek_QuantumOrigins,Schlosshauer_Deco}.

Quantum Darwinism identifies {a} key prerequisite for such a bridge to arise:
 there must be a mechanism by which some aspect of a quantum system can be spread out to many parts of its environment.
Particularly, since the no--cloning theorem~\cite{Wootters_NoCloning} forbids the copying of quantum information,
 this means some classical information from the system must be copied into its environment in such a way that
 given long enough (and enough of the environment), this information can be learned through enough measurements on the environment.

Here we ask: 
 What are the essential features of quantum theory that enable this spreading of classical information in the first place?
Certainly, this is possible in quantum theory's rich mathematical structure of complex Hilbert spaces,
 but can we identify a selective subset of more physically--motivated principles that similarly enable this Darwinistic emergence of classical reality?
 To approach this, we adopt the minimal--assumptions framework of generalized probabilistic theories (GPTs)~\cite{Hardy_Axioms,Barrett_GPT}.
These encompass a wide class of operational scenarios, in which a physical system is entirely characterized by its experimental statistics resulting from preparation and subsequent measurement procedures. 
The GPT approach has thus far enjoyed particular success in identifying which operational features are necessary or sufficient for quantum phenomena like teleportation~\cite{Barnum_Teleport}, no-cloning~\cite{Barnum_NoBroadcasting}, entanglement~\cite{Barrett_GPT}, phase and interference~\cite{Garner_GPTPhase,Dahlsten_BL}, or decoherence~\cite{Richens_DecoherenceGPTs}.
With this article, we aim to extend this canon to include a form of Quantum Darwinism.

We first introduce an idealized instance of Quantum Darwinism (\cref{sec:essentialQD}), which is based on CNOT-like fan-out interactions and carries essential features of Darwinism. 
After that, we provide a brief overview of the GPT framework (\cref{sec:IntroGPT}).
We then proceed to the results of the article:
 an operational formulation of idealized Quantum Darwinism (\cref{sec:definition}), followed by necessary (\cref{sec:necessary}) and sufficient (\cref{sec:sufficient})  conditions for such to exist.
Particularly, we show that both entangled states and entangled measurements are necessary features in any theory with non-classical features that exhibits Darwinism.
This suggests the counterintuitive general principle that in the presence of local non-classicality, a classical world can only emerge if this non-classicality can be ``amplified'' to a form of entanglement. 
We then identify how other physically--motivated features, such as the no-restriction hypothesis~\cite{Chiribella_Purification,Janotta_GPTsRestriction} and strong symmetry~\cite{BMU_Ix}, or the existence of decoherence~\cite{Richens_DecoherenceGPTs}, are sufficient to imply the presence of Darwinism.
Finally (\cref{sec:STM}), we give a concrete example of a non-classical theory other than quantum theory that admits Darwinism: we show its existence in Spekkens' Toy Model~\cite{Spekkens_ToyModel} and its convex extensions~\cite{Janotta_GPTsRestriction,Garner_GPTPhase}.

\section{Background}
\label{sec:Background}
\subsection{Quantum Darwinism}
\label{sec:essentialQD}

The typical setting of Quantum Darwinism (QD)~\cite{Zurek_QuantumOrigins,Zurek_QDReview1,Zurek_QDReview2} consists of a central system $\mathcal{S}$ interacting with a multi-partite environment $\mathcal{E}_1, \ldots \mathcal{E}_N$.
This is similar to the setting in which {\em decoherence} is studied (e.g.~\cite{Schlosshauer_Deco}),
 but rather than focusing on the change in $\mathcal{S}$'s state,
 QD is concerned with the information that {fragments} of the environment can learn about $\mathcal{S}$.

Not everything about $\mathcal{S}$ can be spread to the environment -- for instance, sharing arbitrary quantum information would violate the {\em no-cloning principle}~\cite{Wootters_NoCloning} {(or, more generally, the no-broadcasting theorem \cite{Barnum_NoBroadcasting})}.
Nonetheless, something can still be learned about $\mathcal{S}$ -- perhaps because the interaction induces certain quantum states on system and environment such that measurements made on $\mathcal{S}$ and $\{\mathcal{E}_{n}\}$ in the right choice of basis yield correlated outcomes.
This interaction must also preserve some aspect of the initial state of $\mathcal{S}$, so that what the environment learns can be considered as being about $\mathcal{S}$.

In the ideal scenario, we would like to extract as much classical information from any $\mathcal{E}_{n}$ about $\mathcal{S}$ as we could from $\mathcal{S}$ directly.
Holevo's theorem~\cite{Holevo_Bound} tells us that the most information that can possibly be shared with each environmental system is upper-bounded by that directly obtainable from a single measurement on $\mathcal{S}$.
A simple way by which this can be realized, when $\mathcal{S}$ and all of $\mathcal{E}_{n}$ (${n}=1\ldots N$) are $d$-dimensional quantum systems, 
 is as follows.
Let $\mathcal{M}:= \{\ket{0}, \ldots \ket{d-1}\}$ be some orthonormal basis.
Suppose $\mathcal{S}$ is initially in a pure state $\ket{\psi}_S = \sum \alpha_k \ket{k}$, 
 and each environmental system starts in a pure basis state $\ket{j_{n}}_{n} \in \mathcal{M}$. 
 (The use of $\mathcal{M}= \{\ket{0}, \ldots \ket{d-1}\}$ for the environment just means that we label the relevant environment states $\ket{0}$, $\ket{1}$, \dots, $\ket{d-1}$. It does not imply that these environment states have the same physical properties as the corresponding state of the main system or of the other environment systems.)
Consider the following {\em fan-out} gate (a generalization of control-NOT / control-shift gates, see \cref{fig:fanout}):
\begin{align}
{\rm FAN}\!\left( \ket{k}_\mathcal{S}\otimes \bigotimes_{{n}=1}^N \ket{j_{{n}}} \right) & := \ket{k}_{\mathcal{S}} \otimes \bigotimes_{{n=1}}^N \ket{j_{{n}} \oplus k},
\label{eq:QPointerStates}
\end{align}
such that
\begin{align}
 {\rm FAN}\!\left( \ket{\psi}_\mathcal{S}\otimes \bigotimes_{{n}=1}^N \ket{j_{{n}}} \right)
\hspace{-3em} & \hspace{3em} = \sum_{k} \alpha_{k} {\rm FAN}\!\left(\ket{k}_{\mathcal{S}} \otimes \bigotimes_{{n}=1}^N \ket{j_{{n}}} \right)\nonumber\\ 
& = \sum_{k} \alpha_{k} \ket{k}_{\mathcal{S}} \otimes \bigotimes_{{n}=1}^N \ket{j_{{n}} \oplus k},
\label{eq:QFanOut}
\end{align}
where $j_{n} \oplus k$ indicates addition modulo $d$.

\begin{figure}[htb]
\centering
\includegraphics[width=\linewidth]{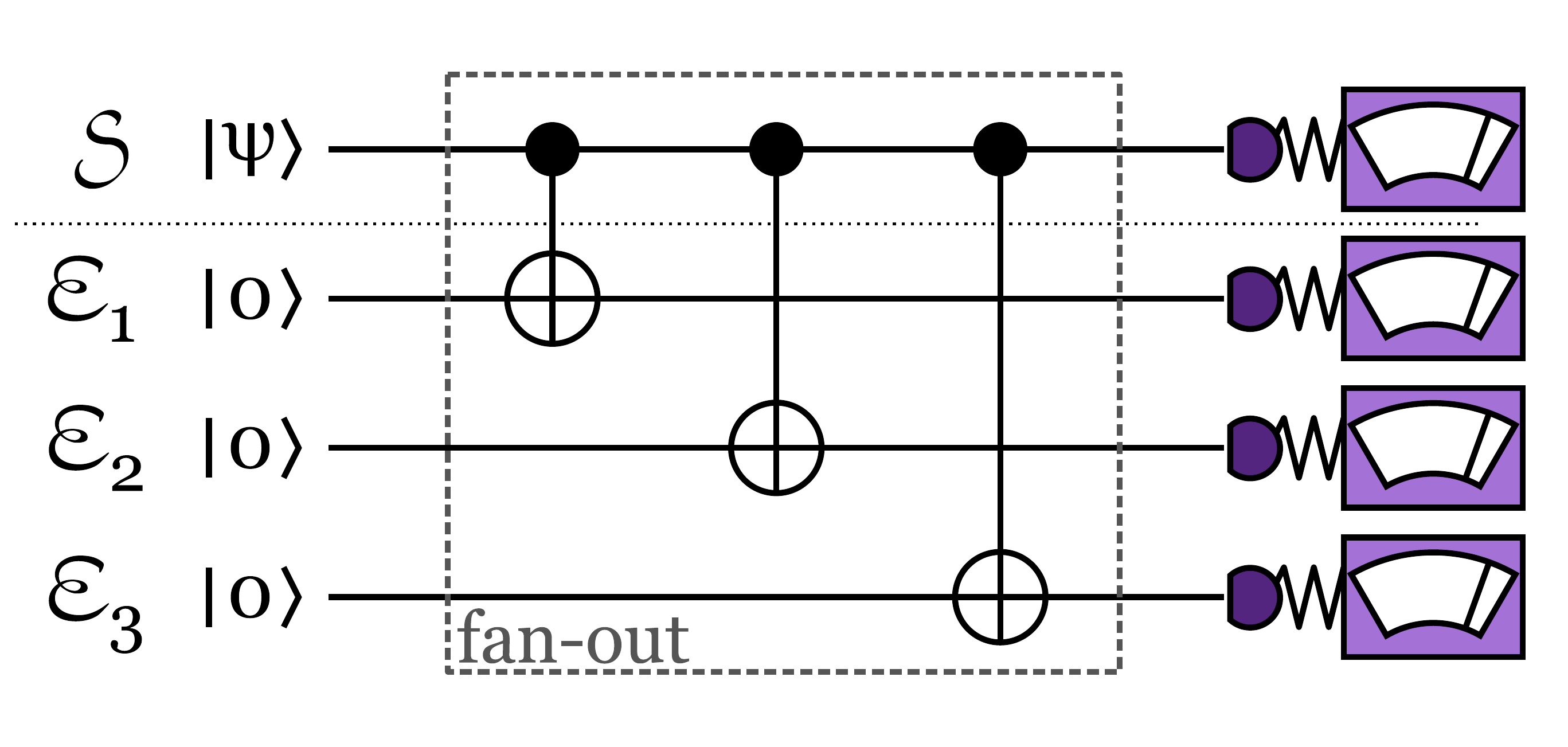}
\caption{ \caphead{{Idealized} Quantum Darwinism: fan-out gate.}
The fan-out gate (\cref{eq:QFanOut}) is realized for the case $N=3$, $d=2$ by three consecutive CNOT gates. After this process, the statistics of the computational-basis measurement $Z$ on all of the environmental subsystems ($\mathcal{E}_1$, $\mathcal{E}_2$ and $\mathcal{E}_3$) agrees with those of the main system ($\mathcal{S}$).
Meanwhile, the statistics of this measurement on the system are the same as if it had been directly made on $\ket{\psi}$.
As such, the classical information about $Z$ in $\mathcal{S}$ has been spread to its environment.
\label{fig:fanout}
}
\end{figure}

It is clear that fan-out realizes the above ideals, perfectly broadcasting classical information about $\mathcal{M}$ on $\mathcal{S}$ to every environment system, while preserving the outcome probabilities of $\mathcal{M}$ on $\mathcal{S}$.
First, if $\ket{\psi}_{\mathcal{S}} \in \mathcal{M}$ (as in Eq.~\eqref{eq:QPointerStates}), then it remains unchanged after the interaction; 
 that is, $\mathcal{M}$ is the \emph{pointer basis} selected by ${\rm FAN}$.
Moreover, if $\ket{\psi}_\mathcal{S}\in\mathcal{M}$, this element can be perfectly identified by simply measuring any of $\mathcal{E}_{n}$ with $\mathcal{M}$ and applying the appropriate relabelling (subtraction of $j_{n}$ modulo $d$) -- in which case the so-called {\em einselection process}{, as defined by}~\citet{Zurek_QuantumOrigins}, occurs on the level of each environmental subsystem.
Furthermore, when $\ket{\psi}_{\mathcal{S}}$ is a superposition of multiple {states in} $\mathcal{M}$, the resulting entangled state now has the property that whatever the outcome of $\mathcal{M}$ {on} $\mathcal{S}$,
the \emph{same} outcome will be obtained by making $\mathcal{M}$ on $\mathcal{E}_{n}$ (again, via subtraction of $j_{n}$ modulo $d$). 
Finally, the statistics of measuring $\mathcal{M}$ on $\mathcal{S}$ before and after the fan-out are identical.
Thus, such a fan-out is said to implement an {\em {idealized} Darwinism process}.

In this idealized setting implemented by fan-out, {\em any} state in {the basis} $\mathcal{M}$ represents a valid initial state of an environment $\mathcal{E}_{n}$ for which FAN can register information about $\mathcal{S}$'s pointer basis.
This multiplicity of ``good registers'' is not necessary for Darwinism: indeed, one could have a {\em minimal Darwinist process} where the emergence of classical information is observed only for specific initial environmental states (e.g.\ vacuum modes~\cite{Blume-Kohout_QDBrownian}).
However, the idealized process above is more robust to modifications in the initial state of the environment subsystems -- generically reducing `misalignment', in the language of \citet{Zwolak_NoisyChannels}.

QD can also be studied in specific physical scenarios. 
For instance, it has been analyzed in models with spin-like interactions, both theoretically \cite{Blume-Kohout_RedundantInfo,Zwolak_NoisyChannels,Zwolak_MixedEnvironment,Zwolak_Amplification} and experimentally \cite{Unden_QDNV,Ciampini_ExperimentalQD}, in models of quantum Brownian motion \cite{Blume-Kohout_QDBrownian}
 and models of a system immersed in thermal illumination \cite{Riedel_QDIllumination,Riedel_QDIlluminationExtended}, to name a few. 
In these cases, only partial information is spread (typically quantified through mutual--information quantities -- though the efficacy of this is of debate~\cite{Horodecki_NoMI, Le_ObjectivityQDxSSB}).
For instance, pointer states may not be perfectly robust to interaction, the information may not be perfectly registered in the environment \cite{Blume-Kohout_QDBrownian,Zwolak_MixedEnvironment,Riedel_QDIllumination}, or a more general class of measurements than projection onto the pointer basis may be used~\cite{Brandao_QD,Knott_QDInfiniteDimension}. 
However, for the purpose of this article (and preempting the need to cast the scenario in the operational language of GPTs, which certainly profits from simplicity), we restrict our discussion to the idealized form of Darwinism described above.

\subsection{The GPT framework}
\label{sec:IntroGPT}
Generalized probabilistic theories (GPTs) are a minimal--assumption framework in which a physical theory is specified by the statistics of every experiment that could be conducted within it.
The fundamental elements of a GPT correspond to laboratory operations,
 such as state preparations, and measurement outcomes.
In addition to the aforementioned isolation of quantum features~\cite{Barnum_Teleport,Barnum_NoBroadcasting,Barrett_GPT,Garner_GPTPhase,Dahlsten_BL,Richens_DecoherenceGPTs},
 this broad operational approach makes the GPT framework well-suited for attempts to reconstruct quantum theory either from experimental data~\cite{MazurekPRS_Reconstruction} or from sets of reasonable physically--motivated axioms~\cite{Hardy_Axioms,Chiribella_Axioms,Masanes_DerivQT,BMU_Ix}.
Theories such as quantum theory (QT) and classical probability theory (CPT) are GPTs,
 but the framework also admits more exotic theories such as ``boxworld''~\cite{Barrett_GPT} or higher-dimensional Bloch ball state spaces~\cite{Mueller_3DSpace}.

In this section, we briefly review the aspects of the GPT framework that are relevant for our discussion.
For more detailed and pedagogical introductions to the GPT framework, see e.g.~\cite{Hardy_Axioms,Barrett_GPT,Mueller_LesHouches}.

\subsubsection{Single Systems}
\label{sec: GPTsingle}
The primitive elements of a GPT are the {\em states} that one can prepare, and the outcomes of measurements (known as {\em effects}) that one can make on a given physical system. 
Mathematically, states (including sub- and super-normalised states) are given by the elements of a closed subset $A_+$ of some finite-dimensional real vector space $A$. 
With a slight abuse of notation, the physical system will also be denoted $A$. 
This subset $A_+$ is assumed to be a \emph{cone}, meaning that $\varphi,\omega\in A_+$ and $\lambda\geq 0$ imply that $\lambda\varphi\in A_+$ and $\varphi+\omega\in A_+$. 
Furthermore, $A_+$ is assumed to be \emph{generating}, i.e.\ ${\rm span}(A_+)=A$, and \emph{pointed}, i.e.\ $A_+\cap (-A_+)=\{0\}$. 
(For the example of QT, this is the cone of positive semidefinite matrices, see Example~\ref{ex:QT} below.)

Effects correspond to elements in a generating cone $E_A \subseteq A^*$, where $A^*$ is $A$'s dual space of linear functionals. 
The probability of observing effect $e\in E_A$ given a preparation $\omega \in A_+$ is given by $e(\omega)$. 
Since this must be non-negative, we must have $E_A\subseteq A_+^*$, where $A_+^* :=\{e\in A^*\,\,|\,\, e(\omega)\geq 0\mbox{ for all }\omega\in A_+\}$ is the \emph{dual cone} of $A_+$~\cite{Webster94}.
We assume the existence of a distinguished {\em unit effect} $u_A\in E_A$ such that for all $a\in E_A$ there is some $\lambda>0$ with $a \leq \lambda u_A$ (where $a \leq b$ if and only if there exists some $c \in E_A$ such that $a + c = b$).
The {\em measurements} of a theory correspond to collections of effects $\{e_i\}_{i=1\ldots N}$ that sum to $u_A$ -- each constituent effect corresponds to one mutually exclusive outcome. 
Since $\sum_i e_i(\omega)=u_A(\omega)$, we can interpret $u_A(\omega)$ as the normalization of the state $\omega$ --- that is, the total probability to obtain any outcome if the measurement is performed on the corresponding physical system.
We say an effect is {\em valid} if it can be part of a measurement (i.e. $e\in E_A$ and $e\leq u_A$). 

If $E_A=A_+^*$, we say that the system is \emph{unrestricted}, or that it \emph{satisfies the no-restriction hypothesis}~\cite{Chiribella_Purification,Janotta_GPTsRestriction}. 
From $u_A$ and $A_+$, one can infer its compact convex set of {\em normalized states} $\Omega_A := \{\omega \in A_+\, |\, u_A(\omega) = 1 \} \subset A_+$. 
An example is sketched in~\cref{fig:GPTSketch}.

\begin{figure}[bth]
\centering
\includegraphics[width=0.8\linewidth]{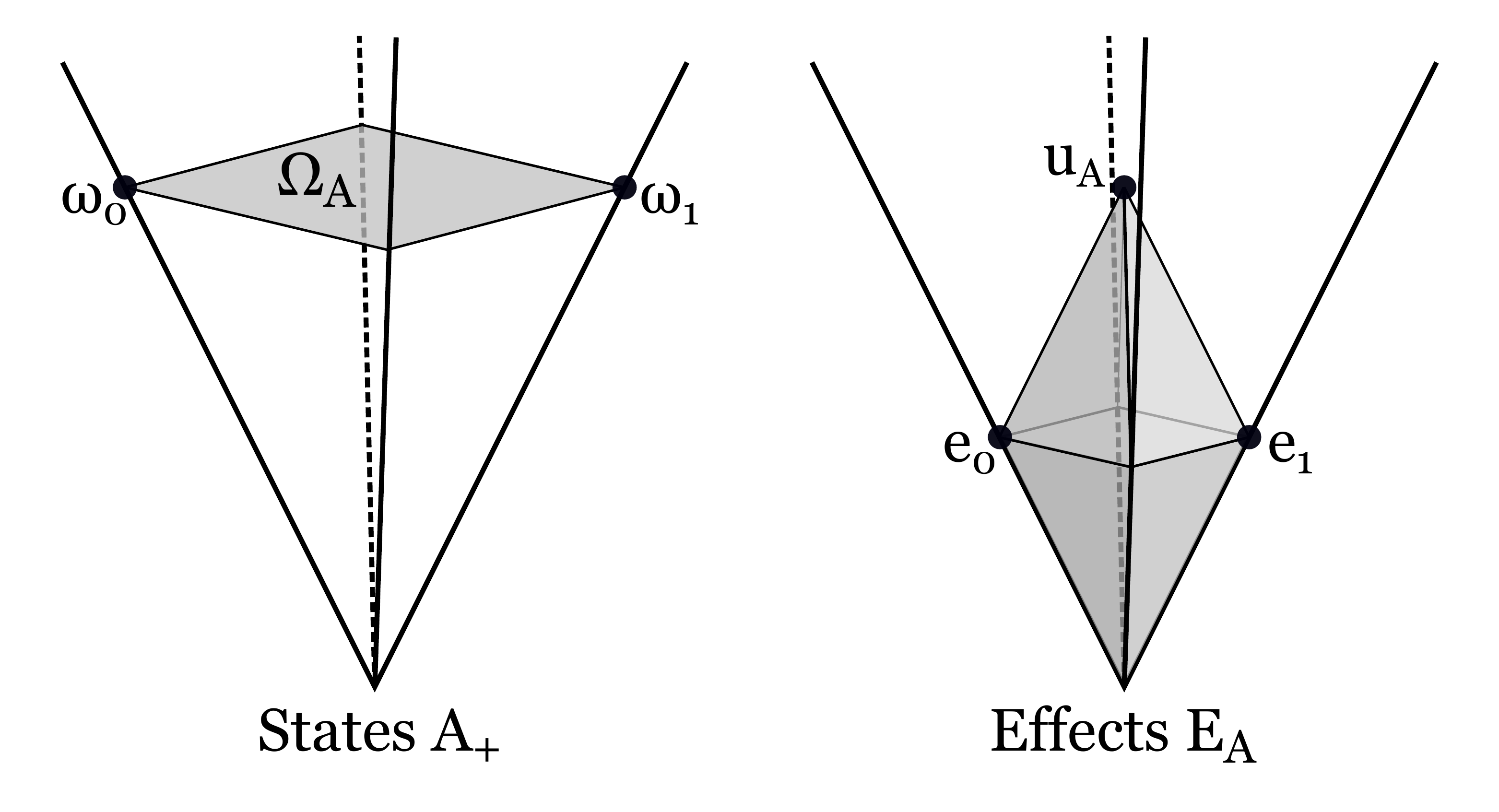}
\caption{ \caphead{Geometric picture of a GPT.}
An example state space $A_+$ (LHS) and effect space $E_A$ (RHS) of a GPT with $A=\reals^3$ is drawn.
On the RHS, the {\em unit effect} $u_A$ is labeled, and all effects on or within the shaded octahedron are {\em valid} in that $e\leq u_A$.
Two {\em pure} effects $\{e_0,e_1\}$ that satisfy $e_0\!+\!e_1\!=\!u_A$ and hence form a {\em refined measurement} are labelled.
On the LHS, the convex set of normalized states $u_A(\omega) = 1$ is shaded as $\Omega_A$.
Within it, a {\em maximal frame} of two pure states $\{\omega_0,\omega_1\}$ is labelled.
\label{fig:GPTSketch}
}
\end{figure}

The convexity of $A_+$ and $E_A$ amounts to the assumption that statistical fluctuations can always be introduced into an experiment.
Consider measurement outcome $e\in E_A$ on one of two preparations $\omega_1$ or $\omega_2$, with respective statistics $e(\omega_1)$ and $e(\omega_2)$.
If $\omega_1$ is prepared with probability $p$ and $\omega_2$ otherwise,
 then this preparation procedure should be representable by the single state $\omega$ whose statistics satisfy:
 $e(\omega) = p e(\omega_1) + (1-p)e(\omega_2) = e(p\omega_1 + (1-p)\omega_2)$.
It then follows that $\omega = p\omega_1 + (1-p)\omega_2$. A similar interpretation of convex combinations applies to the effects.

An effect $e\in E_A$ is said to be pure~\cite{Chiribella_Axioms,Chiribella_PP} {(also called ray-extremal in some literature, e.g. \cite{Mueller_Bit-Symmetry,Al_Safi_ReversibleDichotomyOddN})} if $e=\sum f_i$, with $f_i \in E_A$, implies that $f_i\propto e$ for all $i$ (see also~\cref{fig:GPTSketch}).
Pure effects cannot be obtained from (non-trivial) coarse-graining of other effects. 
A collection of pure effects that sum to $u_A$ with no effects proportional to any other in the set is known as a {\em refined measurement}. 
A {\em pure state} is defined to be a normalized state that is extremal in $\Omega_A$, i.e.\ that cannot be written as a non-trivial convex combination of other normalized states.
A \emph{frame} is a collection of pure states $\{\omega_j\}$ that can be perfectly distinguished in a single measurement: i.e.\ there is at least one measurement $\{e_i\}$ such that $e_i(\omega_j)=\delta_{ij}$. 
A \emph{maximal frame} is a frame with the largest number of distinguishable states {for that system.} 

Dynamics of single systems in the framework of GPTs are described by linear maps $T:A\to A$ known as {\em transformations}. 
Transformations $T$ must map states to states, i.e.\ $T(A_+)\subseteq A_+$, and effects to effects, in the sense that if $e\in E_A$ is a valid effect, then $e\circ T$ must also be a valid effect.  (The latter corresponds to an outcome where transformation $T$ has been applied before the measurement.) Motivated by the intuition to consider only closed-system dynamics in which all environments are explicitly modelled, we will in the following restrict our attention to \emph{reversible transformations}\footnote{{The GPT framework can also be formulated to admit discussion of irreversible transformations \cite{Barrett_GPT}.}}. These are transformations $T$ that are invertible as a linear map and whose inverse $T^{-1}$ is also a transformation. Since transformations can be composed, it follows that the reversible transformations of any GPT system $A$ form a group $\mathcal{T}_A$. Furthermore, they map the set $\Omega_A$ of normalized states onto itself.

In summary, a GPT system $A$ is defined by a tuple $(A,A_+,E_A,u_A,\mathcal{T}_A)$ of a real vector space, the state and effect cones, the unit effect, and the group of reversible transformations. Let us illustrate this framework with two familiar examples:
\begin{example}[Quantum theory (QT)] 
\label{ex:QT}
An $n$-level quantum system corresponds to the GPT system $(A^{(n)},A_+^{(n)},E_A^{(n)},u_A^{(n)},\mathcal{T}_A^{(n)})$ with
\begin{align}
&A^{(n)}=\mathbf{H}_n(\mathbb{C}),\quad A_+^{(n)}=\mathbf{H}_n^+(\mathbb{C})\simeq E_A^{(n)},\quad u_A^{(n)}=\mathbf{1}_n,\nonumber\\
&\qquad \mathcal{T}_A^{(n)}=\{\rho\mapsto U\rho U^\dagger\,\,|\,\, U^\dagger U=\mathbf{1}_n\},
\end{align}
where $\Csa{n}$ is the real vector space of $n\!\times\!n$ complex Hermitian matrices, and $\Csap{n}$ the subset of positive semidefinite matrices. Via the Hilbert-Schmidt inner product, $\langle X,Y\rangle:={\rm tr}(XY)$, we can identify $A^{(n)}$ with its dual space such that the effects are also Hermitian matrices. For example, $u_A^{(n)}(\rho)={\rm tr}(\rho)$ can be written $\langle\mathbf{1}_n,\rho\rangle$, hence we can identify $u_A^{(n)}=\mathbf{1}_n$.

The measurements $\{E_i\}_{i=1,\ldots,N}$ thus correspond to POVMs (positive operator-valued measures), i.e.\ $E_i\geq 0$ and $\sum_i E_i=\mathbf{1}_n$. The normalized states $\Omega_A^{(n)}$ are the (unit-trace) density matrices,  and the maximal frames correspond to the various $n$-element orthonormal bases of the Hilbert space $\mathbb{C}^n$. The reversible transformations are the unitary conjugations. Pure effects correspond to rank-$1$ POVM elements.
\end{example}

\begin{example}[Classical probability theory (CPT)] 
\label{ex:CPT}
A classical random variable that can take $n$ different values corresponds to the GPT system $(B^{(n)},B_+^{(n)},E_B^{(n)},u_B^{(n)},\mathcal{T}_B^{(n)})$ with
\begin{align}
&B^{(n)}=\mathbb{R}^n,\quad B_+^{(n)}=\{x\in\mathbb{R}^n\,\,|\,\,\mbox{all }x_i\geq 0\}\simeq E_B^{(n)},\nonumber\\
&\qquad u_B^{(n)}=(1,1,\ldots,1)\trans,\quad \mathcal{T}_B^{(n)}\simeq S_n.
\end{align}
In this notation, we have identified $\mathbb{R}^n$ with its dual space via the usual dot product $x\cdot y=\sum_i x_i y_i$. The unit effect is thus $u_B^{(n)}\cdot p=\sum_{i=1}^n p_i$, and so $\Omega_B^{(n)}$ is the simplex of $n$-dimensional probability vectors, i.e.\ $\Omega_B^{(n)}=\{p\in\mathbb{R}^n\,\,|\,\,p_i\geq 0,\,\sum_i {p_i} =1\}$.
The reversible transformations are the permutations of the entries: $p_i\mapsto p_{\pi(i)}$, with $\pi$ some permutation of $\{1,2,\ldots,n\}$. Thus, the group of reversible transformations is a representation of the permutation group $S_n$.

A crucial signature of classicality is that CPT has (up to relabelling) only a \emph{single} refined measurement $\{e_i\}$. 
Its effects are $e_i(p)=p_i$, and it can be interpreted as asking which of the $n$ possible configurations is actually the case. 
It distinguishes the (up to relabelling) unique maximal frame $\{\omega_j\}$, where $\omega_j:=(0,\ldots,0,\underbrace{1}_j,0,\ldots,0)\trans$. 
\end{example}
Both QT and CPT are \emph{unrestricted} and \emph{self-dual}~\cite{Mueller_Bit-Symmetry}, i.e.\ there is some inner product according to which $A_+$ is isomorphic to $E_A$. Note that GPTs will in general satisfy neither of these two properties.

\subsubsection{Maximal classical information (MCI) frames}
\label{sec:MCI}
Our goal is to generalize idealized Quantum Darwinism -- in particular, the mechanism for perfect spreading of classical information via fan-out gates -- to GPTs. 
As a first step, we have to identify the analogue of the pointer states and the measurements that read out their encoded classical information (pointer observable). 
We will focus on Darwinism generalizations that allow one to extract the maximal amount of classical information. 
This keeps the problem technically tractable in general theories, while still allowing us to explore a generalized phenomenology of quantum Darwinism.
In the quantum case, such maximal classical information is encoded onto an orthonormal basis $\{\ket{j}\}$. 
The natural analogue of this in a GPT is a maximal frame $\{\omega_j\}$.

Let us consider the measurements that could extract this classical information, i.e.\ the pointer observable. 
As seen in Example~\ref{ex:QT}, QT enjoys a strong form of duality that allows one to treat the pure states $\omega_j = \ket{j}\bra{j}$ and the corresponding rank-1 projective measurements $e_j(\bullet) = \mathrm{Tr}[\ket{j}\bra{j}\ \bullet ]$ as the ``same'' objects, and it is exactly this dual set of rank-$1$ projectors that form the measurement that extracts the maximal amount of information out of the system.

In general, GPTs do not have such an automatic duality between states and effects.
Moreover, measurements that distinguish the elements of a maximal frame do not even need to be {pure}. 
However, since we are interested in the idealized case, where one spreads the maximal classical information contained in some system, we will here focus on maximal frames that can be distinguished by a refined measurement:
\begin{definition}[Maximal classical information in GPTs]
\label{def:MCIframe}
A maximal frame $\omega_1,...,\omega_n$ is called a \textbf{maximal classical information frame} (MCI-frame) if there is a refined measurement $\{e_i\}\subset E_A$ 
which discriminates the states $\omega_j$, i.e. $e_j(\omega_k) = \delta_{jk}$.
\end{definition}
In other words, an MCI-frame is a maximal frame with the additional condition that it is discriminated by \emph{pure} effects.
Many GPTs contain MCI-frames: quantum theory certainly does (in the form of orthonormal bases), and so do quantum theory over the real numbers and over the quaternions, and $d$-ball state spaces. 
As expected, classical theories in all dimensions also have MCI-frames.
Furthermore, so-called ``dichotomic'' systems as defined in Ref.~\cite{Al_Safi_ReversibleDichotomyOddN} contain MCI-frames, which includes unrestricted systems whose sets of normalized states are regular polygons with an even number of vertices, or a $d$-cube or $d$-octoplex for $d\geq 3$.

In \cref{app:Examples}, we give an example of a state space that does \emph{not} have an MCI-frame: the pentagon. 
This example illustrates the counterintuitive properties of such systems: the pentagon has at most \emph{two} perfectly distinguishable states, but one can in some sense encode \emph{more} than one bit of information into such a system \cite{Massar_InfoPolygon}. 
That is, any classical bit that sits inside this state space does not represent the maximal amount of information that can be encoded into the system. 
For the remainder of this work, we will thus exclude such systems and focus on state spaces that contain MCI-frames.

\subsubsection{Composite systems}
\label{sec: GPTComp}
Darwinism is inherently linked to composition of subsystems; therefore, we need to understand how to treat composition in GPTs. 
There are several approaches to this~\cite{Janotta_GPT,Chiribella_Purification}, including category-theoretic formulations~\cite{Coecke_CP}. 
Here, we will motivate and state a list of minimal assumptions on any composite state space that allows us to formulate a generalization of our idealized Darwinism. 
We begin by discussing a bipartite GPT system $AB$, and then turn to the case of composing more than two subsystems.

First, we demand that the combined state space $AB$ has a notion of independent parallel preparation. 
This means that given some state $\varphi^A$ on $A$ and some state $\omega^B$ on $B$, there should be a state of $AB$ (denoted $\varphi^A\odot\omega^B$) that represents the state obtained by the independent local preparation of the two states on $A$ and on $B$.
Since statistical mixtures of local preparations must lead to statistical mixtures of the corresponding global state, the map $\odot$ must be bilinear.

As pure states can be interpreted as states of maximal knowledge, we assume that independent parallel preparations of pure states lead to global pure states, i.e.\ we demand that if $\varphi^A$ and $\omega^B$ are pure then so is $\varphi^A\odot\omega^B$~\cite{Chiribella_PP}.
Likewise, there should exist a notion of parallel implementation of measurements on the systems. 
For this we require another bilinear function (also denoted by $\odot$) that maps effects $e^A\in E_A$, $f^B\in E_B$ to effects $e^A\odot f^B\in E_{AB}$.
Furthermore, if one performs a parallel implementation of two local measurements on a composite state whose parts were prepared independently in parallel, then the probabilities should factorize in the sense that $e^A_j\odot f^B_k(\varphi^A\odot \omega^B)=e^A_j(\varphi^A)f^B_k(\omega^B)$. 
In other words, independent local procedures lead to statistical independence. 
The bilinearity of $\odot$ ensures the validity of the no-signalling principle: the choice of local measurement on $B$ does not affect the outcome probabilities of local measurements in $A$ (and vice-versa). Indeed, $\sum_j e^A_i\odot e^B_j=e_i^A \odot u^B$, for all effects $e_i^A$ and any measurement $\{e^B_j\}$. 
Similarly as for states, we assume that the composition of pure effects results in a pure effect.

Finally, we must ensure that the global structure is consistent with the local structure.
Consider a valid composite effect $e^{AB} \in E^{AB}$ and a normalized state $\omega^B \in \Omega_B$. 
Then the effect $\tilde e^A$ defined by $\tilde e^A(\varphi^A):=e^{AB}(\varphi^A\odot\omega^B)$ should be a valid effect on $A$, i.e.\ $\tilde e^A\in E_A$ {and $\tilde e^A \leq u_A$} : it can be implemented by preparing $\omega^B$ on $B$ and then measuring $e^{AB}$ on $AB$.

Similarly, consider a global state $\omega^{AB} \in \Omega_{AB}$ shared between two parties $A$ and $B$. 
Imagine that one of the parties, say $B$, implements a local measurement $\{f^B_k\}_k \subset E_B$ and tells the other party the outcome $k$. 
Then agent $A$ holds a conditional state, which should be a (subnormalized) element of the state space of $A$. 
More specifically, for an effect $e^A \in E_A$, the probability for both $f^B_k$ and $e^A$ to be obtained is given by $e^A\odot f^B_k(\omega_{AB})$. 
This implicitly defines a subnormalized state $\tilde \omega^A$ on $A$ via $e^A(\tilde\omega^A) = e^A\odot f^B_k(\omega^{AB})$, which must thus be an element of $A_+$. In the special case of the trivial measurement $f^B_k=u^B$, the state $\tilde\omega^A$ becomes the reduced state on $A$. A similar condition should hold if the roles of $A$ and $B$ are interchanged.

Together, we will call these assumptions the \emph{minimal assumptions on composition}\footnote{We do not here require any particular form of parallel composition of {\em transformations}, so we omit this from our definition.}.

\begin{definition}[Minimal assumptions on composition] 
\label{def:Composition}
A composition of GPT systems $A$ and $B$ is a GPT system $AB$ together with two bilinear maps $A\times B\to AB$ and $A^*\times B^*\to (AB)^*$, both denoted by $\odot$, satisfying the following:
\begin{enumerate}[i.]
    \item All product states are allowed and normalized: if $\omega^A\in\Omega_A$ and $\omega^B\in\Omega_B$ then $\omega^A\odot\omega^B\in\Omega_{AB}$.
    \item All products of valid effects are valid effects: if $e^A\in E_A$ and $e^B\in E_B$ then $e^A\odot e^B\in E_{AB}$.
    In particular, local measurements cannot lead to probabilities larger than $1$: $u^A\odot u^B \le u^{AB}$.
    \item Local measurements on product states yield statistically independent outcomes: $e^A\odot f^B(\omega^A\odot \omega^B)=e^A(\omega^A)f^B(\omega^B)$.
    \item Products of pure states (effects) are pure states (effects).
    \item Conditional effects: for all effects $e^{AB} \in E_{AB}$ and all normalized states $\varphi^A \in \Omega_A$ and $\omega^B\in\Omega_B$, also $e^{AB}(\varphi^A\odot \bullet)\in E_B$ and $e^{AB}(\bullet\odot \omega^B) \in E_A$ are effects.
    \item Conditional states: for all states $\omega^{AB}\in\Omega_{AB}$ and all effects $e^A\in E_A$, $f^B\in E_B$, the vectors $\tilde\omega_A$, $\tilde\omega_B$ which are implicitly defined via
    \begin{align}
    	\tilde e^A(\tilde\omega^A)&= \tilde e^A\odot f^B(\omega^{AB}) \\
    	\tilde f^B(\tilde\omega^B)&= e^A\odot \tilde f^B(\omega^{AB})
    \end{align}
     must be states, i.e.\ $\tilde\omega_A\in A_+$, $\tilde\omega_B\in B_+$.
     \item {Multipartite systems: Composites $A=A_1 A_2\ldots A_N$ of $N\geq 3$ GPT systems are defined by iterated pairwise composition (satisfying all desiderate above), in an arbitrary, but fixed order. For example, $A=(A_1 A_2) A_3$ defines $A$ as a composition of GPT systems $B$ and $A_3$, where $B$ is a composition of GPT systems $A_1$ and $A_2$. Another possibility would be to define a composite $A=A_1(A_2 A_3)$, and we are \emph{not} demanding that the resulting GPT systems must agree. We allow \emph{any} such choice as long as it is fixed within each theorem of this paper, and we will drop the brackets from the notation.
}
\end{enumerate}
\end{definition}
While these assumptions imply the no-signalling principle, we do \emph{not} demand the popular principle of ``tomographic locality''~\cite{Hardy_Axioms}, i.e.\ that the $\omega^A\odot \omega^B$ span all of $AB$. 
Thus, the $\odot$ operation cannot in general be identified mathematically with the tensor product operation. The above minimal assumptions are also compatible, for example, with QT over the real numbers~\cite{Hardy_Axioms}.

As we know from QT, a striking feature of composite systems in non-classical theories is entanglement. 
Having a definition of composite systems at hand, we are in place to define entangled states and effects in GPTs~\cite{Barnum_Teleport}:
\begin{definition}[Entangled states]
\label{def:EntangledStates}
Consider a composite system $A=A_1 A_2\ldots A_N$. States $\omega^A\in\Omega_A$ which can be written as 
\begin{align}
\omega^A = \sum_i p_i \,\omega_i^{A_1}\odot\omega_i^{A_2}\odot\ldots\odot \omega_i^{A_N}
\end{align}
with $\omega^{A_i}\in\Omega_{A_i}$ and $\{p_i\}$ a probability distribution, are called \emph{separable}. States which cannot be written in this form are called \emph{entangled}.
\end{definition}
\begin{definition}[Entangled effects]
\label{def:EntangledEffects}
Effects $e^A\in E_A$  which can be written as
\begin{align}
   e^A=\sum_i e_i^{A_1}\odot e_i^{A_2}\odot\ldots\odot e_i^{A_N}
\end{align}
with $e_i^{A_j}\in E_{A_j}$ are called \emph{separable}. Effects which cannot be written in this form are called \emph{entangled}.
\end{definition}
{
A pure {state/}effect is separable if and only if it is a product of pure {states/}effects (see, e.g. \cref{app:SepEffect}).}

\begin{shaded}
\textbf{Summary of assumptions.} 
We consider theories that satisfy:
\begin{itemize}
\item For a single system $(A, A_+, E_A, u_A,\mathcal{T}_A)$: 
 $A$ is finite-dimensional. We do {\em not} assume the no-restriction hypothesis.
\item For pairs of systems: composition satisfies all conditions of~\cref{def:Composition}. In particular,
products of pure states (or effects) are pure, but we do {\em not} assume tomographic locality.
\item For three or more systems: composites like {$ABCD$ are defined via iterated pairwise composition in an arbitrary, but fixed order (for example $(AB)(CD)$ or $A(B(CD))$). However, we do \emph{not} demand associativity.}
\end{itemize}
Unless otherwise stated, all introduced states are normalized, and all introduced effects are valid ($e\leq u_A$).
\end{shaded}

\clearpage

\section{Results}
\label{sec:results}

\subsection{A definition of Darwinism in GPTs}
\label{sec:definition}

With all these ingredients we can now ask if the fan-out mechanism for Darwinism is present in GPTs other than quantum theory.
To answer this, we must first formulate the features of the idealized Darwinism process in an {\em operational} way -- that is, in terms of experimental statistics.

To this end, recall the scenario of {idealized} \emph{Quantum} Darwinism (\cref{sec:essentialQD}).
The desire is to broadcast some classical information encoded within $\mathcal{S}$ to the environment, say, relating to pointer measurement $\mathcal{M}:= \{\ket{k}\bra{k}\}_{k=0\ldots d-1}$.
Let each environment system begin in an eigenstate of $\mathcal{M}$ (for system ${n}$, labeled $\ket{j_{n}}$),
 then after the fan-out operation ${\rm FAN}$ (\cref{eq:QFanOut}), the outcome probabilities when measuring $\mathcal{M}$ on any environment $\mathcal{E}_{n}$ will satisfy
\begin{align}
{\rm P}_\mathcal{S}(\mathcal{M}=k) = |\alpha_k|^2 = {\rm P}_{\mathcal{E}_{n}}(\mathcal{M}=j_{n}+k)  \enspace\enspace \mbox{for all }k.
\label{eq:ProbBroadcastSE}
\end{align}
Moreover, if one makes the joint measurement $\mathcal{M}^{\otimes (N+1)} = \{\ket{k_0}\bra{k_0}\otimes\dots\otimes\ket{k_N}\bra{k_N}\}$ on the entire composite system, 
  the probability of outcome $(k_0,...,k_N)$ is
\begin{align}
\label{eq:correlationsDarwinism}
p(k_0,\dots k_N) = |\alpha_{k_0}|^2 \delta_{k_0,k_1-j_1} \dots \delta_{k_0,k_N-j_N}.
\end{align}
This is the sense in which objectivity can emerge under Quantum Darwinism: 
 when this mechanism succeeds,
 all independent observers can learn about the same (maximal) classical information and agree about their findings.
Moreover, ${\rm P}_{\mathcal{S}}(\mathcal{M}\!=\!k)$ is the same before and after the fan-out is performed.

To generalize {such fan-out implemented} Darwinism to the GPT framework, we must capture the same operational behaviour on the level of probabilities.
First, we need an analogue of pointer states -- a set of distinguishable states corresponding to the classical information to be broadcast.
As mentioned in \cref{sec:MCI}, this role is played by an MCI-frame $\{\omega^{(0)}_j\}_{j=0, \ldots d-1}$ of $\mathcal{S}$ and its corresponding refined measurement $\{e^{(0)}_k\}$ with the distinguishing property $e^{(0)}_k (\omega^{(0)}_j) = \delta_{jk}$. 
Again,
 we assume the main system is in some pure state $\nu$, that may not be an element of $\{\omega^{(0)}_j\}_{j=0, \ldots d-1}$.
Lacking the mathematical structure of a Hilbert space,
 we cannot so easily express $\nu$ as a superposition of frame elements.
Nonetheless, we may readily recover the outcome probabilities when $\nu$ is measured by $\mathcal{M} := \{e^{(0)}_k\}$:
\begin{equation}
\label{eq:EasyEffect}
{\rm P}_\mathcal{S}\!\left(\mathcal{M}=k\right) = e^{(0)}_k(\nu).
\end{equation}
In the special case when $\nu$ {\em is} a member of the MCI-frame, $\nu=\omega_{j_0}^{(0)}$, we have ${\rm P}_\mathcal{S}\!\left(\mathcal{M}=k\right) = \delta_{j_0 k}$.

To carry the $d$ outcomes of the MCI-frame measurement spread from system $\mathcal{S}$, we assume that each environment system (labeled by ${n} \in \{1 \ldots N\}$) contains an MCI-frame $\{\omega^{({n})}_j\}_{j=0, \ldots d-1}$, distinguished by some refined measurement $\{e^{({n})}_k\}$. Like qubits in quantum theory, the $\mathcal{E}_{n}$ are not necessarily standalone systems like single particles, but they can correspond to effective subsystems of larger environmental systems, picked out by the specific form of the interaction with $\mathcal{S}$. Let us briefly consider the simplest case with just a single environment, initially in the first state $\omega_0$ of the frame $\{\omega^{(1)}_{j}\}$.
Then, to exhibit the same operational behaviour as \cref{eq:correlationsDarwinism} (via \cref{eq:EasyEffect}), the joint probability of any pair of outcomes $j_0$ and $j_1$ on $\mathcal{S}$ and $\mathcal{E}_1$ should satisfy
\begin{equation}
\label{eq:GeneralCNOT}
(e^{(0)}_{j_0} \odot e^{(1)}_{j_1})\left[{T_{\mathrm{ID}}}(\nu\odot\omega_0)\right] = e^{(0)}_{j_0}(\nu) \delta_{j_0,j_1}.
\end{equation}
{Here we use the label ID for ``idealized Darwinism'', which we will formally define below.}

In this way, the distribution $\{e_{j_0}(\nu)\}_{j_0}$ is broadcast to the environment, as in Eq.~\eqref{eq:ProbBroadcastSE}.
Crucially, Eq.~\eqref{eq:GeneralCNOT} implies that the system and environment will agree on the outcome of $\mathcal{M}$ on $\mathcal{S}$. 
Moreover, {the} probabilities of such an outcome when directly measuring $\mathcal{S}$ are not affected by the transformation $T_{{\rm ID}}$.
That is, $T_{{\rm ID}}$ is a member of the phase group~\cite{Garner_GPTPhase} of measurement $\{e_{j_0}\odot u\}_{j_0}$, where $u$ is the unit effect.
This can be seen by summing Eq.~\eqref{eq:GeneralCNOT} over $j_1$.

The same operational desiderata easily extend to the more general case of $N$ environmental systems, each now starting in an arbitrary frame state $\omega^{({n})}_{k_{n}}$. We summarize this with the following definitions:
\begin{definition}
\label{def:DarwinGPT}
A composition of GPT system $\mathcal{S}$ and environments $\mathcal{E}_1, \ldots, \mathcal{E}_N$ is said to admit an {\bf {idealized} Darwinism process} if
\begin{enumerate}[(a)]
\item $\mathcal{S}$ has a $d$--state MCI-frame $\{\omega^{(0)}_k\}$, discriminated by a refined measurement $\{e^{(0)}_j\}$, and
\item each $\mathcal{E}_{n}$ has a $d$-state MCI-frame $\{\omega^{({n})}_j\}$ discriminated by a refined measurement $\{e^{({n})}_j\}$, such that
\item there exists a reversible (``fan-out{-like}'') transformation $T_{{\mathrm{ID}}}\in\mathcal{T}_{\mathcal{S}\mathcal{E}_1\ldots\mathcal{E}_N}$ that satisfies
    \begin{align}
	    (e^{(0)}_{j_0}\odot e^{(1)}_{j_1}\odot... \odot e^{(N)}_{j_N})[T_{{\rm ID}}(\nu \odot \omega_{k_1}^{(1)} \odot... \odot \omega_{k_N}^{(N)})]\nonumber\\
	    = \delta_{j_1, j_0+k_1} ... \delta_{j_N, j_0 + k_N} e^{(0)}_{j_0}(\nu)
\label{eq:DarwinGPT} 
\end{align}
for all $k_1, \ldots k_N, j_0, j_1, \ldots j_N$ and {all} $\nu \in \Omega_{\mathcal{S}}$, where addition is modulo $d$.
\end{enumerate}
\end{definition}

\begin{definition}
\label{def:SpreadingCI}
If for a collection of MCI-frames $\{\omega^{(i)}_j\}$ that satisfy items (a) and (b) of~\cref{def:DarwinGPT}, a reversible transformation $T_{{\mathrm{SCI}}}\in\mathcal{T}_{\mathcal{S}\mathcal{E}_1\ldots\mathcal{E}_N}$ satisfies
\begin{align}
T_{{\mathrm{SCI}}}(\omega^{(0)}_{j_0} \odot \omega^{(1)}_{j_1} \odot \ldots \odot \omega^{(N)}_{j_N}) \hspace{-8em} & \nonumber\\ 
& = \omega^{(0)}_{j_0} \odot \omega^{(1)}_{j_0+j_1} \odot \ldots \odot \omega^{(N)}_{j_0+j_N}
\label{eq:DarwinismStrong}
\end{align}
{for all $j_0\ldots,j_N$}, then we say that $T_{{\mathrm{SCI}}}$ {\bf robustly spreads classical information}.
\end{definition}

\Cref{def:SpreadingCI} demands that the system and environment behave in some sense like classical information registers: if, for example, $j_1=\ldots=j_N=0$, the transformation $T_{{\mathrm{SCI}}}$ copies the classical information in $\mathcal{S}$ to the environments, directly on the level of states. 
In quantum theory, such robust spreading of classical information is sufficient for Darwinism: the pointer basis of $\mathcal{S}$ spans the system's Hilbert space, and so Eq.~\eqref{eq:DarwinismStrong} implies Eq.~\eqref{eq:DarwinGPT} due to the state vector linearity of unitary maps. 
More generally, Definitions~\ref{def:DarwinGPT} and~\ref{def:SpreadingCI} are equivalent in quantum theory, in the sense that idealized Quantum Darwinism processes are exactly those that robustly spread classical information.

However, this equivalence does not hold for arbitrary GPTs, since Eq.~\eqref{eq:DarwinismStrong} will not in general imply Eq.~\eqref{eq:DarwinGPT} {or vice-versa. More specifically, \cref{def:SpreadingCI} only considers the impact of $T_{\mathrm{SCI}}$ on the states $w_{j}^{(0)}$ of a specific MCI-frame. However, \cref{def:DarwinGPT} considers arbitrary states $\nu$. As non-quantum GPTs do not have an underlying Hilbert space structure, the action of $T_{\mathrm{SCI}}$ on the states of the specific MCI-frame will in general not fully determine its action for arbitrary $\nu$. Therefore, we do not expect \cref{def:SpreadingCI} to imply \cref{def:DarwinGPT}.}

In the opposite direction, even if~\cref{def:DarwinGPT} holds, \cref{def:SpreadingCI} can put additional constraints on both the system and the environment.
With respect to the system, one needs to consider the possibility of a $T_{\mathrm{{ID}}}$ that preserves the statistics of $\{e^{(0)}_{j}\}$ on $\mathcal{S}$, but still changes the state of $\mathcal{S}$, even if $\mathcal{S}$ is prepared in one of the frame states $\omega_j^{(0)}$. This is impossible in quantum theory, since every rank-$1$ quantum projector $E_j^{(0)}$ has a unique normalized and pure state $\omega_j^{(0)}$ that satisfies $\tr \left( E_j^{(0)}\omega_j^{(0)}\right) = 1$. However, many GPT systems (such as gbits~\cite{Barrett_GPT}) violate the analogous operational condition on MCI-frames, which can in some cases be traced back to the fact that GPTs need not obey the usual quantum uncertainty principles~\cite{Dahlsten_BL}. With respect to the environment, \cref{def:SpreadingCI} precludes the possibility, allowed in \cref{def:DarwinGPT}, that the transformation creates exotic correlations between the $\mathcal{E}_{n}$ while preserving the statistics of the product measurements $e_{j_1}^{(1)}\odot\ldots\odot e_{j_N}^{(N)}$. 

Thus, \cref{def:DarwinGPT} captures the essential features for idealized Darwinism on the observational level, while \cref{def:SpreadingCI} treats MCI-frames as classical subsystems acting as classical memory. 
As discussed above, in the more general framework of GPTs, these processes may be treated as logically independent.

Finally, we remark on the possibility of weaker definitions that, in certain conditions, still capture the essence of the emergence of classical objectivity: namely, a {\em minimal Darwinism process}, equivalent to \cref{def:DarwinGPT}, except that Eq.~\eqref{eq:DarwinGPT} is only required to hold for some particular choice of $k_1\ldots k_N$.
This amounts to a process in which measurements on the environmental systems are only guaranteed to agree about the system's state when they are initialized in a particular state.

\subsection{Necessary features for Darwinism in GPTs}
\label{sec:necessary}

In QT, the fan-out gate (\cref{eq:QFanOut}) can create entanglement whenever the system is not initialized to a pointer state.
The first main results of this paper are to show that entanglement--creation is a necessary property of {\em any} generalized {idealized} Darwinism process {in GPTs that are non-classical in a sense made precise below. In particular, for idealized Darwinism to be present, either the GPT shares some important features with CPT, or it must create entanglement. The first nonclassical signature we will focus on relates to the existence of pure states with indeterminate outcomes on refined measurements:}

\begin{definition}[quasi-classical MCI-frames]
\label{def:qcMCI}
An MCI-frame is \emph{quasi-classical} if for all effects $e_i$ of the distinguishing measurement, it holds that $e_i(\nu)\in\{0,1\}$ for \emph{all} pure states $\nu\in\Omega$.
\end{definition}

Pure measurements that
can give non-deterministic outcomes on some pure states are often considered a characteristic nonclassical feature of quantum theory.
GPTs that have at least one MCI-frame that is not quasi-classical must be non-classical (in the sense that they cannot be classical theory, as outlined in \cref{ex:CPT}).
Conversely, however, even if every MCI-frame is quasi-classical, this is no guarantee that the theory itself is classical theory: a counter-example being boxworld's gbits \cite{Barrett_GPT}. However, the presence of such quasi-classical MCI-frames is rare among non-classical GPTs, as exemplified in \cref{App:n-gonGPTs}.
We now present our first result, by showing that preventing a {fan-out--like} Darwinism process from creating entangled states implies that the GPT must have a quasi-classical MCI-frame.

\begin{theorem}
\label{theorem:DneedsEntangledStates}
Suppose that we have an ideal Darwinism process for which the fan-out{--like Darwinism} transformation $T_{{ID}}$ maps separable states to separable states. 
Then, for every pure state $\nu\in\Omega_{\mathcal{S}}$, we have $e_i^{(0)}(\nu)=0$ or $e_i^{(0)}(\nu)=1$ for all $i$. 
That is, the MCI-frame associated with $T_{\rm ID}$ is quasi-classical.
\end{theorem}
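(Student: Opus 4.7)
The plan is to exploit the idealized Darwinism equation under the assumption that the output of $T_{\rm ID}$ is separable, combined with the probability factorization on product states and a counting argument on pure effects.

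First, I would observe that the input $\nu \odot \omega_{k_1}^{(1)} \odot \ldots \odot \omega_{k_N}^{(N)}$ is a product of pure states (each $\omega_{k_n}^{(n)}$ is pure as an element of an MCI-frame, and $\nu$ is pure by assumption), hence pure by the composition axiom (iv). Since $T_{\rm ID}$ is reversible, it preserves the extremal structure of $\Omega_{\mathcal{S}\mathcal{E}_1\ldots\mathcal{E}_N}$ and therefore maps pure states to pure states; the image is by hypothesis separable. Using the fact noted in the excerpt that a pure separable state is necessarily a product of pure states, I can write
\begin{equation}
T_{\rm ID}(\nu \odot \omega_{k_1}^{(1)} \odot \ldots \odot \omega_{k_N}^{(N)}) = \pi^{(0)} \odot \pi^{(1)} \odot \ldots \odot \pi^{(N)}
\end{equation}
with each $\pi^{(n)}$ pure (and normalized).

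Next, I would plug this factorization into the defining equation~\eqref{eq:DarwinGPT}, using the factorization of probabilities on product states (axiom (iii)) to rewrite the left-hand side as $e^{(0)}_{j_0}(\pi^{(0)}) \cdot e^{(1)}_{j_1}(\pi^{(1)}) \cdots e^{(N)}_{j_N}(\pi^{(N)})$. Fix some index $j_0$ with $e^{(0)}_{j_0}(\nu) > 0$. Setting $j_n = j_0 + k_n$ for all $n \geq 1$ forces each factor $e^{(n)}_{j_0+k_n}(\pi^{(n)})$ to be strictly positive. Then, for any $n \geq 1$ and any $j_n' \neq j_0 + k_n$, evaluating \eqref{eq:DarwinGPT} at that $j_n'$ (keeping the other indices at $j_0 + k_m$) produces $0$ on the right-hand side; since every other factor is positive, we conclude $e^{(n)}_{j_n'}(\pi^{(n)}) = 0$. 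Together with $\sum_{j_n} e^{(n)}_{j_n}(\pi^{(n)}) = u^{(n)}(\pi^{(n)}) = 1$, this gives $e^{(n)}_{j_0+k_n}(\pi^{(n)}) = 1$ for every $n \geq 1$.

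The final and decisive step is to run this for two distinct values $j_0 \neq j_0'$ both satisfying $e^{(0)}_{j_0}(\nu), e^{(0)}_{j_0'}(\nu) > 0$. The argument above would simultaneously demand $e^{(1)}_{j_0+k_1}(\pi^{(1)}) = 1$ and $e^{(1)}_{j_0'+k_1}(\pi^{(1)}) = 1$, contradicting the fact that $\{e^{(1)}_j\}$ is a refined measurement on a normalized state (its values must sum to $1$). Hence at most one $j_0$ can have $e^{(0)}_{j_0}(\nu) > 0$, and since these values sum to $1$, exactly one equals $1$ and all others vanish, establishing quasi-classicality of the MCI-frame $\{\omega^{(0)}_k\}$.

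The only subtle point — really the only place nontrivial structure enters — is the reduction from ``pure and separable'' to ``product of pure states''; the excerpt flags this as proved in the appendix, so I would simply invoke it. Once that is in hand, the rest of the argument is a clean finite-dimensional bookkeeping exercise using factorization of product effects and the normalization of refined measurements.
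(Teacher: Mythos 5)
Your proposal is correct and follows essentially the same route as the paper's proof: reversibility plus preservation of separability forces the output to be a product of pure states (via the appendix lemma), and substituting this factorization into Eq.~\eqref{eq:DarwinGPT} lets the Kronecker deltas force deterministic values. The only cosmetic difference is the endgame -- you derive a contradiction from two outcomes with positive probability, whereas the paper first sums over the environment indices to get $e_i(\nu)=e_i(\varphi^{(0)})$ and then does a direct case analysis on each $i^*$ -- but the substance is identical.
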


\noindent \textit{Remark.}
This conclusion is valid also for other forms of idealized Darwinism processes that, instead of \cref{def:DarwinGPT}, satisfy the weaker minimal Darwinism condition
\begin{align}
\left(e_{j_0}\odot e_{j_1}^{(1)}\odot\ldots\odot e_{j_N}^{(N)}\right)T\left(\nu\odot \omega^{(1)}\odot\ldots\odot\omega^{(N)}\right) \hspace{-15em} & \nonumber\\
& =e_{{j_0}}(\nu)\delta_{{j_0},j_1}\delta_{{j_0},j_2}\ldots \delta_{{j_0},j_N}
   \label{eq:DarwinCNOT}
\end{align}
for all states $\nu\in\Omega_{\mathcal{S}}$, where $\omega^{(1)},\ldots,\omega^{(N)}$ is {some fixed set of pure states (not necessarily an MCI frame), and the $\{e_i\}_i$ and $\{e^{(j)}_{j_i}\}_i$ are some fixed measurements (as opposed to refined measurements) and $T$ is a reversible transformation}.

\begin{proof}
Since $T$ is a reversible transformation, it maps pure states to pure states. 
Hence, if it also preserves separability, then there are pure states $\varphi^{(0)},\ldots,\varphi^{(N)}$ (which may all depend on $\nu$) such that (see \cref{app:SepEffect})
\begin{align}
   T\left(\nu\odot\omega^{(1)}\odot \ldots \odot \omega^{(N)}\right)=\varphi^{(0)}\odot\varphi^{(1)}\odot\ldots\odot\varphi^{(N)}.
\end{align}
Since $T$ satisfies Eq.~(\ref{eq:DarwinCNOT}), we obtain
\begin{equation}
	e_i(\nu)\delta_{i,j_1}\ldots\delta_{i,j_N}=e_i(\varphi^{(0)}) e_{j_1}^{(1)}(\varphi^{(1)})\ldots e_{j_N}^{(N)}(\varphi^{(N)}).
	\label{eqDeltakram}
\end{equation}
Summing over all $j_1,\ldots,j_N$ yields $e_i(\nu)=e_i(\varphi^{(0)})$ for all $i$.

Now suppose that $i^*$ is an outcome label such that $e_{i^*}(\varphi^{(0)})=0$, then $e_{i^*}(\nu)=0$. On the other hand, consider the case that $e_{i^*}(\varphi^{(0)})\neq 0$. If at least one of the $j_k$ is different from $i^*$, then setting $i=i^*$ in Eq.~(\ref{eqDeltakram}) yields
\begin{align}
   0=\underbrace{e_{i^*}(\varphi^{(0)})}_{\neq 0} e_{j_1}^{(1)}(\varphi^{(1)})\ldots e_{j_N}^{(N)}(\varphi^{(N)}),
\end{align}
hence $e_{j_1}^{(1)}(\varphi^{(1)})\ldots e_{j_N}^{(N)}(\varphi^{(N)})=0$. But since $\sum_{j_1,\ldots,j_N} e_{j_1}^{(1)}(\varphi^{(1)})\ldots e_{j_N}^{(N)}(\varphi^{(N)})=1$, we must have $e_{i^*}^{(1)}(\varphi^{(1)})\ldots e_{i^*}^{(N)}(\varphi^{(N)})=1$, and so $e_{i^*}^{(j)}(\varphi^{(j)})=1$ for all $j$. Recalling Eq.~(\ref{eqDeltakram}) we therefore see that $e_i(\nu)=0$ for all $i\neq i^*$, and so $e_{i^*}(\nu)=1$.

In summary, we obtain $e_{i^*}(\nu)\in\{0,1\}$ for all $i^*$.
\end{proof}
Thus, for all GPT systems $\mathcal{S}$ that contain pure states on which the MCI-frame is not quasi-classical, the corresponding {idealized} Darwinism processes (if they exist) must create entangled states. While this property will be satisfied for typical GPT systems, we cannot immediately conclude that a system satisfying $e(\nu)= 0$ or $1$ {for all pure $\nu$} must be classical {(recall the gbit)}. 
However, as we shall see in following theorem, Darwinism in boxworld (among a wider class of theories) can be ruled out by another necessary condition: this time, on the measurements.

\begin{definition}[Inequivalent refined measurements]
\label{def:inequivalent}
Two refined measurements $\{e_0,\ldots,e_{d-1}\}$ and $\{\tilde{e}_0,\ldots,\tilde{e}_{d-1}\}$ are said to be {\bf inequivalent} if $\{\tilde{e}_i\}$ is not just a relabelling of the measurement $\{e_j\}$, i.e.\ at least one of the $\tilde{e}_i$ is not equal to any of the $e_j$.
\end{definition}

{In CPT, there is a unique refined measurement that sums up to the order unit, which is the distinguishing measurement of the unique MCI-frame.
Therefore, the existence of an inequivalent (refined) measurement also indicates that a GPT has a nonclassical aspect.}
In quantum theory, for instance, inequivalent measurements  correspond to {rank-one} projective measurements in different bases.

\begin{theorem}
\label{thm:EntangledEffects}
{Suppose that we have an idealized Darwinism process such that the system $\mathcal{S}$ has an MCI-frame--distinguishing measurement,
 and at least one other inequivalent refined measurement with the same number of outcomes $d$.}
Then the fan-out{--like Darwinism} transformation $T_{{\rm ID}}$ must map some pure product effects to entangled effects.
\begin{proof}
It will be useful to use the notation $\|e\|:=\max_{\omega\in\Omega} e(\omega)$ for effects $e$. 
Suppose that $T_{{\rm ID}}$ maps all pure product effects to separable effects. 
Then, since $T_{{\rm ID}}$ is reversible and preserves purity, Lemma~\ref{LemPureProductEffects} {(\cref{app:SepEffect})} implies that $T_{{\rm ID}}$ maps pure product effects to pure product effects. 
Hence, due to~\cref{eq:DarwinGPT}, for every $j_0$ and for every $\mathbf{j}=(j_1,\ldots,j_N)$ there are effects $h_{j_0,\mathbf{j}}^{(0)},\ldots,h_{j_0,\mathbf{j}}^{(N)}$ such that
\begin{align}
& \left(e_{j_0}^{(0)}\odot e_{j_1}^{(1)}\odot\ldots\odot e_{j_N}^{(N)}\right)T_{{\rm ID}} \nonumber \\
& \qquad = \, h_{j_0,\mathbf{j}}^{(0)}\odot h_{j_0,\mathbf{j}}^{(1)}\odot\ldots\odot h_{j_0,\mathbf{j}}^{(N)}.
   \label{eq14}
\end{align}
Due to multilinearity, we can move any multiplicative constant into the zeroth factor, and in this way choose the effects such that $\|h_{j_0,\mathbf{j}}^{(i)}\|=1$ for all $i\in\{1,\ldots,N\}$. 
If we had $\|h_{j_0,\mathbf{j}}^{(0)}\|<1$, then the right-hand side could never attain the value $1$ on product states, but we know that it does due to {\cref{def:DarwinGPT}}.
Thus, $\|h_{j_0,\mathbf{j}}^{(0)}\|=1$.

Substituting \cref{eq14} into \cref{eq:DarwinGPT} and noting that the result is valid for every state $\nu\in\Omega_{\mathcal{S}}$, we obtain
\begin{align}
   h_{j_0,\mathbf{j}}^{(0)}\, p_{j_0,\mathbf{j},\mathbf{k}}=\delta_{j_1,j_0+k_1}\ldots \delta_{j_N,j_0+k_N} e_{j_0}^{(0)},
\end{align}
where $p_{j_0,\mathbf{j},\mathbf{k}}:=h_{j_0,\mathbf{j}}^{(1)}(\omega_{k_1}^{(1)})\cdot\ldots\cdot h_{j_0,\mathbf{j}}^{(N)}(\omega_{k_N}^{(N)})\geq 0$. 
The special case of $\mathbf{k}=\mathbf{j}-j_0:=(j_1-j_0,\ldots,j_N-j_0)$ yields $e_{j_0}^{(0)}=p_{j_0,\mathbf{j},\mathbf{j}-j_0} h_{j_0,\mathbf{j}}^{(0)}$.
But since $\|e_{j_0}^{(0)}\|=1=\|h_{j_0,\mathbf{j}}^{(0)}\|$, this implies that $h_{j_0,\mathbf{j}}^{(0)}= e_{j_0}^{(0)}$ for all $\mathbf{j}$.

Since $\mathcal{S}$ is non-classical, there is another refined measurement $\{\tilde e_j^{(0)}\}_j$ which is not just a relabelling (i.e.\ permutation) of $\{e_i^{(0)}\}_i$. 
Using again our assumption that $T_{{\rm ID}}$ maps products of pure effects to product effects, we obtain
\begin{align}
&\left(\tilde e_{j_0}^{(0)}\odot e_{j_1}^{(1)}\odot\ldots\odot e_{j_N}^{(N)}\right)T_{{\rm ID}} \nonumber \\
&\qquad = \, \tilde h_{j_0,\mathbf{j}}^{(0)}\odot \tilde h_{j_0,\mathbf{j}}^{(1)}\odot\ldots\odot\tilde h_{j_0,\mathbf{j}}^{(N)}
   \label{eqTtoTheRight}
\end{align}
for some suitable effects $\tilde h_{j_0,\mathbf{j}}^{(0)},\ldots,\tilde h_{j_0,\mathbf{j}}^{(N)}$. 
Again, we define the effects such that $\|\tilde h_{j_0,\mathbf{j}}^{(i)}\|=1$ for all $i\in\{1,\ldots,N\}$ (the case $i=0$ will be discussed later). 
Summing over $j_0$, using that $\sum_{j_0}\tilde e_{j_0}^{(0)}=u_{\mathcal{S}}=\sum_{j_0} e_{j_0}^{(0)}$, yields
\begin{align}
&\sum_{j_0} e_{j_0}^{(0)}\odot h_{j_0,\mathbf{j}}^{(1)}\odot\ldots\odot h_{j_0,\mathbf{j}}^{(N)} \nonumber \\
&\qquad = \sum_{j_0} \tilde h_{j_0,\mathbf{j}}^{(0)}\odot \tilde h_{j_0,\mathbf{j}}^{(1)}\odot\ldots\odot \tilde h_{j_0,\mathbf{j}}^{(N)}.
\label{eqInProof2}	
\end{align}
Applying both sides to the product state $\nu\odot\omega_{k_1}^{(1)}\odot\ldots\odot\omega_{k_N}^{(N)}$ and recalling Eq.~(\ref{eq:DarwinGPT}), we obtain
\begin{align}
&   \sum_{j_0} \tilde h_{j_0,\mathbf{j}}^{(0)}(\nu)\tilde h_{j_0,\mathbf{j}}^{(1)}(\omega_{k_1}^{(1)})\ldots \tilde h_{j_0,\mathbf{j}}^{(N)}(\omega_{k_N}^{(N)}) \nonumber\\
&\qquad     =\sum_{j_0} e_{j_0}^{(0)}(\nu)\delta_{j_1,j_0+k_1}\ldots\delta_{j_N,j_0+k_N}.
\end{align}
So far, $\mathbf{j}$ and $\mathbf{k}$ are arbitrary, but now set $j_i:=k_i+l$ for all $i$, where $l$ is fixed (we abbreviate this by $\mathbf{j}=\mathbf{k}+l$). We obtain
\begin{align}
   e_l^{(0)}(\nu)=\sum_{j_0}q_{j_0,\mathbf{k},l} \tilde h_{j_0,\mathbf{k}+l}^{(0)}(\nu),
\end{align}
where $q_{j_0,\mathbf{k},l}:=\tilde h_{j_0,\mathbf{k}+l}^{(1)}(\omega_{k_1}^{(1)})\ldots \tilde h_{j_0,\mathbf{k}+l}^{(N)}(\omega_{k_N}^{(N)})\in[0,1]$. 
Since this is true for all states $\nu\in\Omega_{\mathcal{S}}$, we may again drop the $\nu$ and read it as an equality between effects.
Since $e_l^{(0)}\neq 0$, for every $l$ and for every $\mathbf{k}$ there must be some $j_0$ such that $q_{j_0,\mathbf{k},l}\neq 0$. 
Since $e_l^{(0)}$ is pure, this implies that $e_l^{(0)}\propto \tilde h_{j_0,\mathbf{k}+l}^{(0)}$. Now fix an arbitrary $\mathbf{j}$, and consider the special case $\mathbf{k}:=\mathbf{j}-l$. It follows that for all $l$, there exists at least one $j_0$ such that $e_l^{(0)}$ is a scalar multiple of $\tilde h_{j_0,\mathbf{j}}^{(0)}$. There are $d$ different linearly independent $e_l^{(0)}$ (labelled by $l$), and there are $d$ different $\tilde h_{j_0,\mathbf{j}}^{(0)}$, labelled by $j_0$.
Thus, to every $l$ there is a unique $j_0$ such that $e_l^{(0)}=q_{j_0,\mathbf{j}-l,l} \tilde h_{j_0,\mathbf{j}}^{(0)}$. We have
\begin{align}
   1=\|e_l^{(0)}\|=\underbrace{q_{j_0,\mathbf{j}-l,l}}_{\leq 1} \underbrace{\|\tilde h_{j_0,\mathbf{j}}^{(0)}\|}_{\leq 1},
\end{align}
hence $\|\tilde h_{j_0,\mathbf{j}}^{(0)}\|=1$, and so $e_l^{(0)}=\tilde h_{j_0,\mathbf{j}}^{(0)}$. We can rephrase this as follows. For every $\mathbf{j}$ there is a permutation $\pi$ of the indices such that $\tilde h_{j_0,\mathbf{j}}^{(0)}=e_{\pi(j_0)}^{(0)}$ for all $j_0$.

Now fix some $\mathbf{j}$. Let us return to Eq.~(\ref{eqInProof2}) and apply it to $\omega_{\pi(j_0)}^{(0)}\odot\omega$, where $\pi$ is the permutation corresponding to $\mathbf{j}$, and $\omega$ is an arbitrary global state of the $N$ environments. Using the identities that we have just derived and $e_{j_0}^{(0)}(\omega_i^{(0)})=\delta_{j_0,i}$, we  obtain
\begin{align}
   \tilde h_{j_0,\mathbf{j}}^{(1)}\odot\ldots\odot \tilde h_{j_0,\mathbf{j}}^{(N)}=h_{\pi(j_0),\mathbf{j}}^{(1)}\odot\ldots\odot h_{\pi(j_0),\mathbf{j}}^{(N)}.
\end{align}
Recalling eqs.~(\ref{eq14}) and~(\ref{eqTtoTheRight}), it follows that
\begin{align}
\left(\tilde e_{j_0}^{(0)}\odot e_{j_1}^{(1)}\odot\ldots\odot e_{j_N}^{(N)}\right)T_{{\rm ID}} \hspace{-8.5em} & \nonumber \\
& = \tilde h_{j_0,\mathbf{j}}^{(0)}\odot \tilde h_{j_0,\mathbf{j}}^{(1)}\odot\ldots\odot \tilde h_{j_0,\mathbf{j}}^{(N)} \nonumber \\
& =e_{\pi(j_0)}^{(0)}\odot h_{\pi(j_0),\mathbf{j}}^{(1)}\odot\ldots\odot h_{\pi(j_0),\mathbf{j}}^{(N)} \nonumber \\
& =\left(e_{\pi(j_0)}^{(0)}\odot e_{j_1}^{(1)}\odot\ldots\odot e_{j_N}^{(N)}\right)T_{{\rm ID}}.
\end{align}
Since $T_{{\rm ID}}$ is reversible, the terms in the brackets must be identical. 
Consider the case $j_1=j_2=\ldots=j_N=0$ and environment states $\varphi^{(1)},\ldots,\varphi^{(N)}$ with $e_0^{(k)}(\varphi^{(k)})=1$ for all $k=1,\ldots,N$. 
Then applying the above brackets to the product state $\nu\odot\varphi^{(1)}\odot\ldots\odot\varphi^{(N)}$ yields $\tilde e_{j_0}^{(0)}(\nu)=e_{\pi(j_0)}^{(0)}(\nu)$. 
Since this is true for all $\nu\in\Omega_{\mathcal{S}}$, we obtain $\tilde e_{j_0}^{(0)}=e^{{(0)}}_{\pi(j_0)}$. This contradicts our assumption that $\{\tilde e_j^{(0)}\}_j$ is not just a permutation of the $\{e_i^{(0)}\}_i$.
\end{proof}
\end{theorem}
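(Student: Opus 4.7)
The plan is to proceed by contradiction, assuming that $T_{{\rm ID}}$ maps every pure product effect to a separable effect, and then deriving that $\{\tilde e_j^{(0)}\}_j$ must actually be a relabelling of $\{e_i^{(0)}\}_i$, contradicting inequivalence. Since $T_{{\rm ID}}$ is reversible (so it preserves purity of effects by mapping pure effects to pure effects), and since any pure separable effect is a product of pure effects (Lemma~\ref{LemPureProductEffects} in \cref{app:SepEffect}), the assumption forces that for every multi-index, the pulled-back effect $(e_{j_0}^{(0)}\odot e_{j_1}^{(1)}\odot\ldots\odot e_{j_N}^{(N)})\, T_{{\rm ID}}$ factorises into a product of pure effects $h_{j_0,\mathbf{j}}^{(0)}\odot h_{j_0,\mathbf{j}}^{(1)}\odot\ldots\odot h_{j_0,\mathbf{j}}^{(N)}$.

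The first substantive step is to identify the factor on $\mathcal{S}$ exactly. After absorbing any multiplicative constant into the zeroth factor (and using that the Darwinism equation achieves value $1$ on appropriate product states, forcing all norms to be $1$), I would substitute this factorisation into \cref{eq:DarwinGPT} and evaluate on the product states $\nu\odot\omega_{k_1}^{(1)}\odot\ldots\odot\omega_{k_N}^{(N)}$. The resulting identity $h_{j_0,\mathbf{j}}^{(0)}\,p_{j_0,\mathbf{j},\mathbf{k}} = \delta_{j_1,j_0+k_1}\ldots\delta_{j_N,j_0+k_N}\,e_{j_0}^{(0)}$ at the aligned choice $\mathbf{k}=\mathbf{j}-j_0$, combined with the normalisation $\|h_{j_0,\mathbf{j}}^{(0)}\|=1=\|e_{j_0}^{(0)}\|$, pins down $h_{j_0,\mathbf{j}}^{(0)}=e_{j_0}^{(0)}$ for every $\mathbf{j}$.

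Next, I would run the same factorisation argument with the second, inequivalent refined measurement $\{\tilde e_j^{(0)}\}$, obtaining pure product effects $\tilde h_{j_0,\mathbf{j}}^{(0)}\odot\ldots\odot\tilde h_{j_0,\mathbf{j}}^{(N)}$. The bridge between the two measurements is the identity $\sum_{j_0}e_{j_0}^{(0)}=u_\mathcal{S}=\sum_{j_0}\tilde e_{j_0}^{(0)}$: summing both factorisations over $j_0$ and then evaluating on product states $\nu\odot\omega_{k_1}^{(1)}\odot\ldots\odot\omega_{k_N}^{(N)}$ yields, for any fixed $l$, an expression of $e_l^{(0)}$ as a convex combination of $\tilde h_{j_0,\mathbf{k}+l}^{(0)}$'s (obtained by setting $j_i = k_i + l$). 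Purity of $e_l^{(0)}$ then forces proportionality to some $\tilde h_{j_0,\mathbf{j}}^{(0)}$, and a counting/normalisation argument upgrades this to the equality $e_l^{(0)}=\tilde h_{\pi^{-1}(l),\mathbf{j}}^{(0)}$ for a permutation $\pi$ (depending possibly on $\mathbf{j}$).

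Finally, I would close the contradiction by exploiting reversibility of $T_{{\rm ID}}$. Since both $(\tilde e_{j_0}^{(0)}\odot e_{j_1}^{(1)}\odot\ldots\odot e_{j_N}^{(N)})T_{{\rm ID}}$ and $(e_{\pi(j_0)}^{(0)}\odot e_{j_1}^{(1)}\odot\ldots\odot e_{j_N}^{(N)})T_{{\rm ID}}$ agree (after showing that the environment factors coincide by evaluating on $\omega_{\pi(j_0)}^{(0)}\odot \omega$), invertibility of $T_{{\rm ID}}$ gives $\tilde e_{j_0}^{(0)}=e_{\pi(j_0)}^{(0)}$, contradicting \cref{def:inequivalent}. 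The main obstacle I anticipate is the bookkeeping: carefully tracking multi-indices, ensuring the correct normalisation conventions so that purity/extremality arguments apply, and confirming that the map $j_0\mapsto l$ arising from proportionality is actually a bijection (which requires linear independence of the $d$ pure effects $e_l^{(0)}$ against the $d$ candidates $\tilde h_{j_0,\mathbf{j}}^{(0)}$). Everything else is bilinearity and reversibility plumbing.
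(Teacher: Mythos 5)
Your proposal follows essentially the same route as the paper's own proof: the same contradiction hypothesis, the same use of Lemma~\ref{LemPureProductEffects} plus reversibility to factorise the pulled-back effects, the same normalisation and evaluation on product states to pin down $h_{j_0,\mathbf{j}}^{(0)}=e_{j_0}^{(0)}$, the same bridge via $\sum_{j_0}e_{j_0}^{(0)}=u_\mathcal{S}=\sum_{j_0}\tilde e_{j_0}^{(0)}$ and the purity/counting argument yielding a permutation, and the same reversibility step to conclude $\tilde e_{j_0}^{(0)}=e_{\pi(j_0)}^{(0)}$. The bookkeeping concerns you flag (linear independence of the $d$ pure effects, norm conventions, stripping the environment factors before invoking invertibility) are exactly the points the paper handles explicitly, so the plan is sound.
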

Thus, {in GPTs with at least one additional inequivalent refined measurement with the same number of outcomes as the MCI frame measurement,} a reversible transformation $T_{{ID}}$ that implements an {idealized} Darwinism process will create entangled effects. 
An important consequence is that GPTs without entangled effects, such as those constructed by taking the maximal tensor product in the context of tomographic locality, cannot admit such a process. 
In particular, this rules out Darwinism in boxworld~\cite{Barrett_GPT} (a theory containing the aforementioned gbits) or any dichotomic maximally nonlocal theory. For these specific examples, one could also infer this from Refs.~\cite{Gross_TBoxworld1,Al_Safi_ReversibleDichotomyOddN}, but here we have shown it without having to determine the complete structure of the reversible transformations. 

Interestingly, entanglement for states is also needed in general physical theories if one imposes another condition of relevance for the classical limit: the existence of a decoherence map~\cite{Richens_DecoherenceGPTs}. 
{However,} theories that have an {idealized} Darwinism process -- and by our results need entangled states and measurements -- may not contain such a decoherence map, as we shall show in section \ref{sec:STM}. Therefore, our results provide not only alternative proofs but are complementary to that of \citet{Richens_DecoherenceGPTs}: together they support the idea that this non-classical feature must be present for a locally non-classical theory to admit a meaningful classical limit.

\subsection{Sufficient features for Darwinism in GPTs}
\label{sec:sufficient}
Let us now determine sufficient conditions that guarantee that Quantum Darwinism can be generalized into a theory. 
In particular, we are interested in which operationally well-motivated postulates that have already appeared in the GPT literature can lead to such Darwinism. In this spirit, we will see how a framework that admits decoherence also admits Darwinism.

We will first determine sufficient structure in GPTs to allow for the robust spreading of classical information (in the manner of \cref{def:SpreadingCI}),
before determining which additional postulates can be added to guarantee the existence of an {idealized} Darwinism process (\cref{def:DarwinGPT}) that additionally broadcasts classical information to the environment even when the system is not in a MCI--frame state.

Recall that both, the spreading of classical information and the idealized Darwinism processes, require the system to have an MCI-frame (playing the role of pointer states) that defines the classical information to be spread to the environment (\cref{def:DarwinGPT}(a)).
Likewise the environments must admit MCI-frames on which to receive this classical information (\cref{def:DarwinGPT}(b)).
Even though a theory admitting such frames may arguably be said to contain classical information (i.e.\ admitting ``registers'' that can encode the appropriate values),
 it may not generally admit all (or even any)\ classical information processing -- that is, there is no guarantee that the theory admits sufficient dynamics to satisfy \cref{def:SpreadingCI}.
In the following, we will consider what physical characteristics {\em do} ensure that the theory has enough classical information processing power to implement a fan-out gate in the manner of \cref{eq:DarwinismStrong}.

The first possible characteristic is to demand that composite systems satisfy \emph{strong symmetry}~\cite{Barnum_HigherOrder}:
\begin{definition} \label{def:StrongSymmetry}
A GPT system with group of reversible transformations $\mathcal T$ satisfies \textbf{strong symmetry (on states)} if for all $n \in \mathbb N$ and for all pairs of frames {$\{\omega_1,...,\omega_n\}$} and {$\{\nu_1,...,\nu_n\}$}, there exists some $T \in \mathcal T$ with $T \omega_j = \nu_j$ for all $j$.
\end{definition}
Strong symmetry says that all ways of encoding classical information are computationally equivalent. 
In particular, it implies that classical reversible computation can be performed on the MCI-frames of system and environment: since the set of states $\omega_{j_0,...,j_N} := \omega_{j_0}^{(0)} \odot ... \odot \omega_{j_N}^{(N)}$ constitutes a frame of the composite system, strong symmetry implies that we can perform arbitrary classical reversible gates (and thus arbitrary permutations) of those frame elements. This immediately gives us the following result:
\begin{lemma}
\label{lem:StrongSymmetry}
Consider GPT systems $\mathcal{S},\mathcal{E}_1,\ldots,\mathcal{E}_N$ that carry $d$-outcome MCI-frames. 
Every composition $\mathcal{S}\mathcal{E}_1\ldots\mathcal{E}_N$ that satisfies strong symmetry (on states) admits the robust spreading of classical information.
\end{lemma}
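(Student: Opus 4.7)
The plan is to show that the two families
\[ \omega_{\mathbf{j}} := \omega^{(0)}_{j_0}\odot\omega^{(1)}_{j_1}\odot\ldots\odot\omega^{(N)}_{j_N} \]
and
\[ \nu_{\mathbf{j}} := \omega^{(0)}_{j_0}\odot\omega^{(1)}_{j_0+j_1}\odot\ldots\odot\omega^{(N)}_{j_0+j_N}, \]
indexed by $\mathbf{j}=(j_0,\ldots,j_N)\in\{0,\ldots,d-1\}^{N+1}$, are both frames of the composite system $\mathcal{S}\mathcal{E}_1\ldots\mathcal{E}_N$ of the \emph{same} size $d^{N+1}$. Strong symmetry on states then immediately delivers a reversible transformation $T_{\mathrm{SCI}}$ with $T_{\mathrm{SCI}}(\omega_{\mathbf{j}})=\nu_{\mathbf{j}}$ for every $\mathbf{j}$, which is exactly Eq.~\eqref{eq:DarwinismStrong}.

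First I would verify purity of each $\omega_{\mathbf{j}}$ and $\nu_{\mathbf{j}}$. Each local factor $\omega^{(n)}_j$ belongs to an MCI-frame, hence is pure; iterated application of Def.~\ref{def:Composition}(iv) to the pairwise composition that defines $\mathcal{S}\mathcal{E}_1\ldots\mathcal{E}_N$ then shows that products of pure states are pure, so each $\omega_{\mathbf{j}}$ and $\nu_{\mathbf{j}}$ is a pure normalized state of the composite.

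Next I would exhibit a distinguishing measurement for each family. Iterating Def.~\ref{def:Composition}(iii), the product effects $E_{\mathbf{i}}:=e^{(0)}_{i_0}\odot\ldots\odot e^{(N)}_{i_N}$ satisfy $E_{\mathbf{i}}(\omega_{\mathbf{j}})=\prod_{n=0}^N e^{(n)}_{i_n}(\omega^{(n)}_{j_n})=\prod_{n=0}^N \delta_{i_n,j_n}=\delta_{\mathbf{i},\mathbf{j}}$. Because each $E_{\mathbf{i}}$ is a product of valid effects it is a valid effect on the composite by Def.~\ref{def:Composition}(ii), and the sum $\sum_{\mathbf{i}} E_{\mathbf{i}} = u^{(0)}\odot\ldots\odot u^{(N)}$ is bounded above by $u$ on the composite system (again by Def.~\ref{def:Composition}(ii)). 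Adjoining the remainder effect $r:=u-\sum_{\mathbf{i}} E_{\mathbf{i}}$ (which is a valid effect and vanishes on every $\omega_{\mathbf{j}}$ since both $u$ and $\sum_{\mathbf{i}} E_{\mathbf{i}}$ evaluate to $1$ on normalized product states) yields a genuine measurement that perfectly distinguishes $\{\omega_{\mathbf{j}}\}$. An analogous argument using $\tilde E_{\mathbf{i}}:=e^{(0)}_{i_0}\odot e^{(1)}_{i_0+i_1}\odot\ldots\odot e^{(N)}_{i_0+i_N}$ distinguishes $\{\nu_{\mathbf{j}}\}$, where one uses that $(j_0,j_1,\ldots,j_N)\mapsto(j_0,j_0+j_1,\ldots,j_0+j_N)$ is a bijection of $\mathbb{Z}_d^{N+1}$ so that these shifted products are again valid distinguishing effects.

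Having established that $\{\omega_{\mathbf{j}}\}_{\mathbf{j}}$ and $\{\nu_{\mathbf{j}}\}_{\mathbf{j}}$ are frames of equal cardinality $d^{N+1}$, Def.~\ref{def:StrongSymmetry} applied to the composite yields some $T_{\mathrm{SCI}}\in\mathcal{T}_{\mathcal{S}\mathcal{E}_1\ldots\mathcal{E}_N}$ with $T_{\mathrm{SCI}}(\omega_{\mathbf{j}})=\nu_{\mathbf{j}}$ for all $\mathbf{j}$, which is exactly \cref{eq:DarwinismStrong}. The argument is essentially bookkeeping; the only minor subtlety is that the paper only assumes $u^A\odot u^B\leq u^{AB}$ rather than equality, so I need to complete the product-effect family with the remainder $r$ to get a bona fide measurement. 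No step appears to be a real obstacle: the work is packaged entirely into the hypothesis that strong symmetry holds on the \emph{composite} system.
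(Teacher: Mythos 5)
Your proposal is correct and follows essentially the same route as the paper: the paper simply observes that the product states $\omega_{j_0}^{(0)}\odot\ldots\odot\omega_{j_N}^{(N)}$ constitute a frame of the composite and that the target family is a permutation of it, so strong symmetry (on states) immediately supplies the required $T_{\mathrm{SCI}}$. You merely spell out the bookkeeping the paper leaves implicit (purity of products via \cref{def:Composition}(iv), the product distinguishing measurement via (ii)--(iii), and the remainder effect needed because only $u^A\odot u^B\leq u^{AB}$ is assumed), all of which is sound.
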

While strong symmetry on states implies the robust spreading of classical information in the sense of \cref{def:SpreadingCI}, we do not know whether this property implies the existence of an idealized Darwinism process in the sense of \cref{def:DarwinGPT}. 
Interestingly, the existence of such a process follows if we consider a dual notion of strong symmetry on the \emph{measurements}:
\begin{definition} \label{def:StrongSymmetryEffects}
A GPT system with group of reversible transformations $\mathcal T$ satisfies \textbf{strong symmetry (on effects)} if the following holds for all $n \in \mathbb N$: If $(e_1,\ldots,e_n)$ is a collection of pure effects that perfectly distinguishes some frame, and so is $(f_1,\ldots,f_n)$, then there exists a $T\in\mathcal{T}$ with $e_j=f_j\circ T$ for all $j$.
\end{definition}
If this property holds, we can show the following:
\begin{lemma}
\label{lem:StrongSymmetryEffects}
Consider again GPT systems $\mathcal{S},\mathcal{E}_1,\ldots,\mathcal{E}_N$ that carry $d$-outcome MCI-frames. Every composition $\mathcal{S}\mathcal{E}_1\ldots\mathcal{E}_N$ that satisfies strong symmetry (on effects) admits an {idealized} Darwinism process.
\end{lemma}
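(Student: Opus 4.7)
The plan is to apply strong symmetry (on effects) directly to the composite $\mathcal{S}\mathcal{E}_1\ldots\mathcal{E}_N$, comparing two carefully chosen families of pure effects that both distinguish frames of the composite. By the minimal composition assumptions---products of pure effects are pure (assumption (iv) of \cref{def:Composition}) and the product rule $e^A\odot f^B(\varphi^A\odot\omega^B)=e^A(\varphi^A)f^B(\omega^B)$ of assumption (iii)---the family
\[
g_{(j_0,j_1,\ldots,j_N)}:=e^{(0)}_{j_0}\odot e^{(1)}_{j_1}\odot\cdots\odot e^{(N)}_{j_N},
\]
indexed by $(j_0,\ldots,j_N)\in\{0,\ldots,d-1\}^{N+1}$, consists of $d^{N+1}$ pure effects on the composite that perfectly distinguish the product MCI-frame $\{\omega^{(0)}_{j_0}\odot\cdots\odot\omega^{(N)}_{j_N}\}$.

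The key step is to introduce a second collection of the \emph{same} pure effects, but re-indexed so that applying strong symmetry will produce a fan-out-like action on MCI-frame states. Define
\[
f_{(j_0,j_1,\ldots,j_N)}:=e^{(0)}_{j_0}\odot e^{(1)}_{j_1-j_0}\odot\cdots\odot e^{(N)}_{j_N-j_0}
\]
(arithmetic modulo $d$); as a set this equals $\{g_\mathbf{j}\}$, and it perfectly distinguishes the re-indexed product frame $\{\omega^{(0)}_{j_0}\odot\omega^{(1)}_{j_1-j_0}\odot\cdots\odot\omega^{(N)}_{j_N-j_0}\}$. \Cref{def:StrongSymmetryEffects} then supplies a reversible transformation $T_{\rm ID}\in\mathcal{T}_{\mathcal{S}\mathcal{E}_1\ldots\mathcal{E}_N}$ with $f_\mathbf{j}=g_\mathbf{j}\circ T_{\rm ID}$ for every $\mathbf{j}$. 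Evaluating both sides on the product state $\nu\odot\omega^{(1)}_{k_1}\odot\cdots\odot\omega^{(N)}_{k_N}$ and using $e^{(n)}_i(\omega^{(n)}_j)=\delta_{ij}$ yields
\[
(e^{(0)}_{j_0}\odot\cdots\odot e^{(N)}_{j_N})[T_{\rm ID}(\nu\odot\omega^{(1)}_{k_1}\odot\cdots\odot\omega^{(N)}_{k_N})]=e^{(0)}_{j_0}(\nu)\,\delta_{j_1-j_0,k_1}\cdots\delta_{j_N-j_0,k_N},
\]
which is exactly Eq.~(\ref{eq:DarwinGPT}) after rewriting $\delta_{j_n-j_0,k_n}=\delta_{j_n,j_0+k_n}$.

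The main subtlety to be careful about is the precise hypothesis of \cref{def:StrongSymmetryEffects}: it asks only for collections of pure effects that perfectly distinguish \emph{some} frame, not for complete refined measurements summing to the composite unit effect $u_{\mathcal{S}\mathcal{E}_1\ldots\mathcal{E}_N}$. This matters because under the minimal composition axioms one has only $u^{(0)}\odot\cdots\odot u^{(N)}\le u_{\mathcal{S}\mathcal{E}_1\ldots\mathcal{E}_N}$ in general, so a priori the $\{g_\mathbf{j}\}$ need not constitute a full measurement on the composite. Once this reading is fixed, purity of each $g_\mathbf{j}$ (and hence of each $f_\mathbf{j}$) and perfect discrimination of the corresponding frames both follow directly from \cref{def:Composition}, and no further technical obstacles arise; the entire argument reduces to identifying the correct re-indexing $j_n\mapsto j_n-j_0$.
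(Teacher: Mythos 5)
Your proof is correct and follows essentially the same route as the paper's: both invoke strong symmetry on effects to obtain a $T_{\rm ID}$ satisfying $e_{j_0,j_1,\ldots,j_N}\circ T_{\rm ID}=e_{j_0,j_1-j_0,\ldots,j_N-j_0}$ and then evaluate on product states to recover Eq.~\eqref{eq:DarwinGPT}. The only difference is that you spell out the verification the paper leaves as ``one can check directly,'' and you correctly flag the subtlety that \cref{def:StrongSymmetryEffects} requires only pure effects distinguishing \emph{some} frame rather than a full refined measurement on the composite.
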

\begin{proof}
The $e_{j_0,\ldots,j_N}{:= e_{j_0}^{(0)}\odot e_{j_1}^{(1)}\odot\ldots\odot e_{j_N}^{(N)}}$ are pure effects which perfectly distinguish the frame $\omega_{j_0,\ldots,j_N}$. Thus, strong symmetry on effects implies that there is some $T_{{\rm ID}}\in\mathcal{T}_{\mathcal{S}\mathcal{E}_1\ldots\mathcal{E}_N}$	 with 
\begin{equation}
   e_{j_0,j_1,\ldots,j_N}\circ T_{{\rm ID}} = e_{j_0,j_1-j_0,\ldots,j_N-j_0},
   \label{eqChooseT}
\end{equation}
where subtraction is modulo $d$. One can check directly that this map $T_{{\mathrm{ID}}}$ {indeed} satisfies Eq.~\eqref{eq:DarwinGPT}.
\end{proof}
Thus, the version of Darwinism that is guaranteed to hold (according to \cref{def:DarwinGPT} or \cref{def:SpreadingCI}) depends on whether we demand strong symmetry on the states or on the effects. Is there a way to guarantee it on both? Indeed, it turns out that the no-restriction hypothesis is sufficient for this:

\begin{theorem}
\label{TheSSImpliesD}
Consider GPT systems $\mathcal{S},\mathcal{E}_1,\ldots,\mathcal{E}_N$ that carry $d$-outcome MCI-frames. Every unrestricted composition $\mathcal{S}\mathcal{E}_1\ldots\mathcal{E}_N$ that satisfies strong symmetry (on states) has a transformation $T_{{\rm ID}}\in\mathcal{T}_{\mathcal{S}\mathcal{E}_1\ldots\mathcal{E}_N}$ that robustly spreads classical information \emph{and} that generates an {idealized} Darwinism process.
\end{theorem}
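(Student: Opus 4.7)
The plan is to apply Lemma~\ref{lem:StrongSymmetry} to produce a candidate transformation and then use the no-restriction hypothesis to show that this same transformation also realizes an idealized Darwinism process. First, by Lemma~\ref{lem:StrongSymmetry}, strong symmetry on states provides some $T\in\mathcal{T}_{\mathcal{S}\mathcal{E}_1\ldots\mathcal{E}_N}$ with
\[
T(\omega^{(0)}_{j_0}\odot\omega^{(1)}_{j_1}\odot\ldots\odot\omega^{(N)}_{j_N}) = \omega^{(0)}_{j_0}\odot\omega^{(1)}_{j_0+j_1}\odot\ldots\odot\omega^{(N)}_{j_0+j_N},
\]
which is precisely the robust classical information spreading condition~\eqref{eq:DarwinismStrong}.

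To verify~\eqref{eq:DarwinGPT} for the same $T$, I would study the pulled-back effect $h_{j_0,\vec j} := (e^{(0)}_{j_0}\odot e^{(1)}_{j_1}\odot\ldots\odot e^{(N)}_{j_N})\circ T$. Since $T$ is reversible and, by the no-restriction hypothesis, its dual acts bijectively on the full effect cone while preserving purity (by the same argument that shows reversible maps preserve pure states), each $h_{j_0,\vec j}$ is a pure effect and the collection $\{h_{j_0,\vec j}\}$ is a refined measurement summing to $u_{\mathcal{S}\mathcal{E}_1\ldots\mathcal{E}_N}$ that distinguishes the pulled-back frame $\{\omega^{(0)}_{j_0}\odot\omega^{(1)}_{j_1-j_0}\odot\ldots\odot\omega^{(N)}_{j_N-j_0}\}$. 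A direct reversibility computation shows that $h_{j_0,\vec j}$ attains value $1$ on $\omega^{(0)}_{j_0}\odot\omega^{(1)}_{j_1-j_0}\odot\ldots\odot\omega^{(N)}_{j_N-j_0}$, and the product pure effect $e^{(0)}_{j_0}\odot e^{(1)}_{j_1-j_0}\odot\ldots\odot e^{(N)}_{j_N-j_0}$ manifestly does the same. If these two pure effects must coincide, then the identity $h_{j_0,\vec j} = e^{(0)}_{j_0}\odot e^{(1)}_{j_1-j_0}\odot\ldots\odot e^{(N)}_{j_N-j_0}$ evaluated on $\nu\odot\omega^{(1)}_{k_1}\odot\ldots\odot\omega^{(N)}_{k_N}$ recovers Eq.~\eqref{eq:DarwinGPT} by direct computation.

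The main obstacle is therefore to show that, under the standing hypotheses, the pure effect attaining value $1$ on a given pure state of the composite is uniquely determined. My plan for this uniqueness is to exploit the $n=1$ case of Definition~\ref{def:StrongSymmetry}, which states that $\mathcal{T}_{\mathcal{S}\mathcal{E}_1\ldots\mathcal{E}_N}$ acts transitively on pure states, together with the dual cone structure on effects provided by no-restriction. By transitivity it suffices to establish uniqueness at a single pure state, which I would choose to be a product MCI-frame state $\omega^{(0)}_{j_0}\odot\omega^{(1)}_{j_0}\odot\ldots\odot\omega^{(N)}_{j_0}$; there the refined structure of the distinguishing measurement together with the requirement $e\leq u_{\mathcal{S}\mathcal{E}_1\ldots\mathcal{E}_N}$ pins the pure effect down, and the transitive group action transports this uniqueness to every pure state of the composite. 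Combining the two halves then yields a single $T_{\rm ID}$ that simultaneously witnesses Definitions~\ref{def:DarwinGPT} and~\ref{def:SpreadingCI}, as claimed.
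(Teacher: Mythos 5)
Your first half is fine: Lemma~\ref{lem:StrongSymmetry} does hand you a reversible $T$ realizing Eq.~\eqref{eq:DarwinismStrong}. The second half, however, hinges entirely on the claim that a pure state of the composite admits a \emph{unique} pure effect attaining probability $1$ on it, and this is exactly the point where your argument has a genuine gap. Uniqueness of this kind fails in general GPTs even under the no-restriction hypothesis and in the presence of MCI-frames: the gbit of boxworld~\cite{Barrett_GPT} is unrestricted, carries a (quasi-classical) MCI-frame, and yet each of its pure states assigns probability $1$ to two distinct pure effects. Your proposed justification at the base point --- ``the refined structure of the distinguishing measurement together with $e\leq u$ pins the pure effect down'' --- never invokes strong symmetry, so if it worked it would also apply to the gbit, where it is false; the issue is not whether a unique effect \emph{within the chosen refined measurement} attains value $1$ (that much is true), but whether some \emph{other} pure effect, such as your pulled-back $h_{j_0,\vec j}$, could also attain value $1$ there. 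Transitivity of $\mathcal{T}$ on pure states only transports uniqueness from one point to another; it cannot create uniqueness that is absent at the base point.

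The missing ingredient is the strong self-duality theorem for unrestricted systems with strong symmetry on states \cite{Mueller_Bit-Symmetry}, which is what the paper's proof invokes: there is an inner product $\langle\cdot,\cdot\rangle$ under which frames are orthonormal, the pure effects distinguishing a frame are \emph{necessarily} of the form $e_i=\langle\omega_i,\cdot\rangle$, and every reversible transformation is orthogonal. From this the paper deduces strong symmetry on effects, defines $T_{\rm ID}$ by its action on effects as in Lemma~\ref{lem:StrongSymmetryEffects} (which yields \cref{def:DarwinGPT} directly), and then uses orthogonality to read off the action on frame states, obtaining \cref{def:SpreadingCI}. Your route runs in the opposite direction (states first, then effects), and it would close once this duality result is cited --- orthogonality of $T$ immediately gives $e_{j_0,\vec j}\circ T=\langle T^{-1}\omega_{j_0,\vec j},\cdot\rangle=e_{j_0,\vec j-j_0}$ --- but as written, the uniqueness step is asserted rather than proved, and the tools you name do not suffice to prove it.
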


\begin{proof}
For unrestricted systems $A$ with strong symmetry on states, it was shown in Ref.~\cite{Mueller_Bit-Symmetry} that there is a particularly strong duality between states and effects: there is an inner product $\langle\cdot,\cdot\rangle$ on $A$ such that frames $\omega_1,\ldots,\omega_n$ correspond to orthonormal systems, and the corresponding pure effects with $e_i(\omega_j)=\delta_{ij}$ must be given by $e_i(\omega)=\langle \omega_i,\omega\rangle$. Moreover, all $T\in\mathcal{T}_A$ are orthogonal with respect to this inner product. If $f_1,\ldots,f_n$ is any other collection of pure effects that distinguish a frame (say, $\nu_1,\ldots,\nu_n$), then strong symmetry on states says that there is some $T\in\mathcal{T}_A$ with $T\omega_j=\nu_j$, and so
\begin{align}
e_j\circ T^{-1}(\omega)= \langle\omega_j,T^{-1}\omega\rangle=\langle T\omega_j,\omega\rangle=\langle\nu_j,\omega\rangle=f_j(\omega).
\end{align}
Consequently, $A$ also satisfies strong symmetry on effects. Now, choose $T_{{\mathrm{ID}}}$ as in Eq.~\eqref{eqChooseT}, then we already know that it generates an ideal Darwinism process. Moreover, we have just seen that $T^{-1}_{{\mathrm{ID}}}$ maps the corresponding frame elements onto each other, i.e.
\begin{align}
   T^{-1}_{{\mathrm{ID}}} \ \omega_{j_0,j_1,\ldots,j_N}=\omega_{j_0,j_1-j_0,\ldots,j_N-j_0}.
\end{align}
Applying $T_{{\mathrm{ID}}}$ to both sides shows that $T_{{\mathrm{ID}}}$ robustly spreads classical information in the sense of \cref{def:SpreadingCI}.
\end{proof}
A second path to this spreading of classical information arises from decoherence theory.
In quantum theory, decoherence plays an important role in Quantum Darwinism by explaining in some sense why we see classical probabilities  instead of superposition states.
Recently, a decoherence formalism for GPTs was developed~\cite{Richens_DecoherenceGPTs}, and we shall here see that it enables Darwinism in GPTs as well. We adapt the decoherence formalism of \citet{Richens_DecoherenceGPTs} to our setting:
\begin{definition}[Decoherence maps] 
\label{def:DecoherenceMaps}
Consider any GPT system $A$. A linear map $D:A\to A$ is called a \emph{decoherence map} if the following properties hold :
	\begin{enumerate}
		\item[1.] The image of $A_+$ under $D$ is isomorphic to a classical state space, i.e.\ there exists a frame {$\{\omega_0,...,\omega_{d-1}\}\subset\Omega_A$} such that $D(\Omega_A) = \conv\{\omega_0,...,\omega_{d-1}\}$ (i.e.\ the convex hull of the $\{\omega_i\}$). Consequently, $D$ is normalization-preserving, i.e.\ $u_A\circ D=u_A$.
		\item[2.] $D$ is idempotent, i.e.\ $D\circ D=D$.
		\item[3.] For every classical reversible transformation $T_{\rm C}:D(A)\to D(A)$ there is a reversible transformation $T\in\mathcal{T}_A$ that implements $T_C$, i.e.\ $T(\omega)=T_C(\omega)$ for all $\omega\in D(A_+)$. Not only does this map $T$ preserve the classical state space $D(A_+)$, but it also preserves the corresponding classical effect space $E_A\circ D$.
	\end{enumerate}
Furthermore, if we have a composite GPT system $A=A_1 A_2\ldots A_N$ with decoherence maps $D_1$, \ldots, $D_N$,
	\begin{enumerate}
		\item[4.] $A$ has a decoherence map $D_{1\ldots N}$ that acts as 
\begin{align}
			D_{1\ldots N}(\nu_1 \odot \ldots \odot \nu_N) = D_1(\nu_1) \odot \ldots \odot D_N(\nu_N).
\end{align}
	\end{enumerate}
\end{definition}
\Citet{Richens_DecoherenceGPTs} additionally assume that $D$ is physically implementable, but we do not assume this here.

In the following, we will need a simple property of decoherence maps:
\begin{lemma}
\label{LemCommute}
Consider a GPT system $A$ with decoherence map $D$, and $T$ any reversible transformation that implements some classical transformation in the sense of \cref{def:DecoherenceMaps} item 3. Then $DT=TD$.
\end{lemma}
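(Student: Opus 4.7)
The plan is to exploit the decomposition $A = \mathrm{im}(D) \oplus \ker(D)$ that follows from idempotence $D^2 = D$, and verify $DT = TD$ separately on each summand, using one half of item 3 of \cref{def:DecoherenceMaps} for each summand.

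First I would handle the image $\mathrm{im}(D)$. Because $A_+$ is generating in $A$, the set $\mathrm{im}(D)$ equals the linear span of $D(A_+)$. Item 3 tells us that $T$ preserves the classical state space $D(A_+)$, so by linearity $T$ preserves $\mathrm{im}(D)$. Now for any $\omega\in\mathrm{im}(D)$, idempotence gives $D\omega=\omega$, hence $TD\omega=T\omega$; and since $T\omega\in\mathrm{im}(D)$, idempotence again gives $DT\omega=T\omega$. Thus $DT=TD$ on $\mathrm{im}(D)$.

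Next I would handle the kernel $\ker(D)$, where the goal is to show $T$ maps $\ker(D)$ into itself. For this I invoke the second half of item 3: $T$ preserves the classical effect space $E_A\circ D$ (under pullback of effects). Concretely, for every $e\in E_A$ there is some $e'\in E_A$ with $e\circ D\circ T = e'\circ D$. Now suppose $\omega\in\ker(D)$, i.e.\ $D\omega=0$. Then for every $e\in E_A$,
\begin{align}
e(DT\omega) \;=\; (e\circ D\circ T)(\omega) \;=\; e'(D\omega) \;=\; e'(0) \;=\; 0.
\end{align}
Since $E_A$ is generating in $A^*$, it separates points of $A$, so $DT\omega = 0 = TD\omega$. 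This completes the case.

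Combining both cases by linearity (using the direct sum decomposition $A=\mathrm{im}(D)\oplus\ker(D)$) gives $DT=TD$ on all of $A$. The only potential obstacle is the correct interpretation of ``preserves'' in item 3, i.e.\ that $T$ preserves $E_A\circ D$ in the sense of the pullback action on effects; this is the natural reading since $T$ acts on states and the induced action on effects is by composition on the right. No deep calculation is required; the work is entirely conceptual, lying in using idempotence and each of the two preservation clauses exactly once.
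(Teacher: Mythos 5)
Your proof is correct and is essentially the paper's argument in a different packaging: the paper establishes $DT=DTD$ via the effect-space preservation clause and $DTD=TD$ via the state-space preservation clause, which is exactly your kernel case and image case respectively under the decomposition $A=\mathrm{im}(D)\oplus\ker(D)$. Your reading of ``preserves $E_A\circ D$'' as closure under pullback by $T$ matches the paper's usage, so no gap remains.
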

\begin{proof}
Let $e\in E_A$ and $\varphi\in A_+$, then $f:=e\circ D$ is an element of the classical effect space $E_A\circ D$, and so is $f':=f\circ T$, hence $f'=f'\circ D$. Thus, we have
\begin{align}
   e\circ DT\varphi = f\circ T\varphi=f'(\varphi)=f'\circ D\varphi=e\circ DTD\varphi.
\end{align}
Since $A_+$ and $E_A$ span $A$ and $A^*$, respectively, it follows that $DT=DTD$. But $T$ preserves $D(A)={\rm span}(D(A_+))$, hence $DTD=TD$.
\end{proof}
In analogy with how quantum systems decohere to mixtures of pointer states, it is natural to consider Darwinism for frames that can result from decoherence processes.
\begin{definition}
Consider any GPT system $A$. We say that an MCI-frame $\{\omega_i\}\subset\Omega_A$ together with a corresponding refined measurement $\{e_i\}\subset E_A$ \emph{arises from decoherence} if there is a decoherence map $D:A\to A$ such that $D(A_+)={\rm cone}\{\omega_i\}$ and $E_A\circ D={\rm cone}\{e_i\}$.
\end{definition}
In this definition, ${\rm cone}\{\omega_i\}$ denotes the set of non-negative linear combinations of the $\omega_i$, i.e.\ the convex cone of unnormalized states generated by the MCI-frame (similarly for the $\{e_i\}$).

Let $\{\omega_j^{(0)}\}_{j=0}^{d-1}$ be an MCI-frame of the main system $\mathcal{S}$ that arises from decoherence map $D_0$, and similarly let $\{\omega_j^{(i)}\}_{j=0}^{d-1}$, $i=1,\ldots,N$, be MCI-frames of the environmental systems $\mathcal{E}_1,\ldots,\mathcal{E}_N$ that arise from decoherence maps $D_1,...,D_N$. Then requirement 4 of \cref{def:DecoherenceMaps} implies that there is a decoherence map $D_{0\ldots N}$ with
\begin{align}
	D_{0\ldots N}(\omega_{j_0}^{(0)} \odot \ldots \odot \omega_{j_N}^{(N)}) = D_0(\omega_{j_0}^{(0)}) \odot \ldots \odot D_N(\omega_{j_N}^{(N)}).
\end{align}
Since each $D_i$ is a projection map and since every $\omega_j^{(i)}$ is in its image, we have $D_i(\omega_j^{(i)})=\omega_j^{(i)}$, and hence
\begin{align}
	D_{0\ldots N}(\omega_{j_0}^{(0)} \odot \ldots \odot \omega_{j_N}^{(N)}) = \omega_{j_0}^{(0)} \odot \ldots \odot\omega_{j_N}^{(N)}.
\end{align}
Requirement 3 for decoherence maps implies that the classical transformation defined by Eq.~\eqref{eq:DarwinismStrong} (a particular permutation of the classical pure states) is implemented as a reversible transformation $T_{{\mathrm{SCI}}} \in\mathcal{T}_A$ on the composite GPT system $A:=\mathcal{S}\mathcal{E}_1\ldots\mathcal{E}_N$. This transformation hence robustly spreads classical information in the sense of \cref{def:SpreadingCI}.

Furthermore, consider any state $\nu\in\Omega_{\mathcal{S}}$, and let $\nu_0:=D_0\nu$. Since the MCI-frame of $\mathcal{S}$ arises from $D_0$, there is a convex decomposition $\nu_0=\sum_{i=0}^{d-1} \lambda_i \omega_i^{(0)}$ with $\lambda_i\geq 0$, $\sum_{i=0}^{d-1}\lambda_i=1$. Using Lemma~\ref{LemCommute}, we thus obtain
\begin{align}
(e^{(0)}_{j_0}\odot e^{(1)}_{j_1}\odot\ldots \odot e^{(N)}_{j_N}) T_{{\mathrm{SCI}}}  (\nu \odot \omega_{k_1}^{(1)} \odot\ldots \odot \omega_{k_N}^{(N)}) \hspace{-21.5em} & \nonumber \\
&= e_{j_0,\ldots,j_N}\circ D_{0\ldots N} T_{{\mathrm{SCI}}} (\nu \odot \omega_{k_1}^{(1)} \odot\ldots \odot \omega_{k_N}^{(N)})\nonumber \\
&= e_{j_0,\ldots,j_N}\circ T_{{\mathrm{SCI}}}  D_{0\ldots N}  (\nu \odot \omega_{k_1}^{(1)} \odot\ldots \odot \omega_{k_N}^{(N)})\nonumber \\
&= e_{j_0,\ldots,j_N}\circ T_{{\mathrm{SCI}}}  (\nu_0 \odot \omega_{k_1}^{(1)} \odot\ldots \odot \omega_{k_N}^{(N)})\nonumber \\
&= \sum_{i=0}^{d-1}\lambda_i e_{j_0,\ldots,j_N}\circ T_{{\mathrm{SCI}}}  (\omega_i^{(0)} \odot \omega_{k_1}^{(1)} \odot\ldots \odot \omega_{k_N}^{(N)})\nonumber \\
&= \sum_{i=0}^{d-1}\lambda_i e_{j_0,\ldots,j_N}(\omega_{i,i+k_1,\ldots,i+k_N})\nonumber \\
&=\lambda_{j_0} \delta_{j_1, j_0+k_1} \ldots \delta_{j_N, j_0 + k_N}.
\end{align}
Furthermore, $e_{j_0}^{(0)}(\nu)=e_{j_0}^{(0)}\circ D_0(\nu)={\lambda_{j_0}}$. This proves that $T_{{\mathrm{SCI}}} $ generates an {idealized} Darwinism process.

We summarize our findings in the following theorem:
\begin{theorem}
\label{TheDecImpliesD}
Consider a composition $\mathcal{S}\mathcal{E}_1\ldots\mathcal{E}_N$ of GPT systems $\mathcal{S},\mathcal{E}_1,\ldots,\mathcal{E}_N$ that carry $d$-outcome MCI-frames arising from decoherence. This composite system admits a transformation $T_{{\mathrm{SCI}}} \in\mathcal{T}_{\mathcal{S}\mathcal{E}_1\ldots\mathcal{E}_N}$ that robustly spreads classical information \emph{and} that generates an ideal Darwinism process.
\end{theorem}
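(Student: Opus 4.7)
The plan is to build the desired transformation $T_{\mathrm{SCI}}$ directly from the composite decoherence structure that the hypothesis provides, and then verify both conclusions in turn. First I would invoke item 4 of \cref{def:DecoherenceMaps} to obtain a composite decoherence map $D_{0\ldots N}$ on $\mathcal{S}\mathcal{E}_1\ldots\mathcal{E}_N$ that factors as $D_0\odot\ldots\odot D_N$ on product states. Because each individual $D_i$ is idempotent and the frame states $\omega_j^{(i)}$ lie in its image, we have $D_i(\omega_j^{(i)})=\omega_j^{(i)}$, so $D_{0\ldots N}$ fixes every product of MCI-frame elements $\omega_{j_0}^{(0)}\odot\ldots\odot\omega_{j_N}^{(N)}$. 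These products form a frame of the composite classical state space $D_{0\ldots N}(A_+)$, so the rule $(j_0,j_1,\ldots,j_N)\mapsto(j_0,j_0+j_1,\ldots,j_0+j_N)$ defines a classical reversible transformation of that classical system. Item 3 of \cref{def:DecoherenceMaps} then promotes this classical permutation to a reversible GPT transformation $T_{\mathrm{SCI}}\in\mathcal{T}_{\mathcal{S}\mathcal{E}_1\ldots\mathcal{E}_N}$ acting on the full composite system.

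By construction $T_{\mathrm{SCI}}$ agrees with the permutation on all classical product states $\omega_{j_0}^{(0)}\odot\ldots\odot\omega_{j_N}^{(N)}$, which is exactly \cref{eq:DarwinismStrong}, so the first half of the theorem follows immediately. For the idealized-Darwinism half, the plan is to extend the argument to an arbitrary (not necessarily classical) input $\nu\in\Omega_{\mathcal{S}}$. I would first apply $D_{0\ldots N}$ on the left of the fan-out probability expression, noting that the product effect $e^{(0)}_{j_0}\odot\ldots\odot e^{(N)}_{j_N}$ lies in the classical effect cone $E\circ D_{0\ldots N}$ and hence is invariant under precomposition with $D_{0\ldots N}$. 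Then \cref{LemCommute} lets me commute $D_{0\ldots N}$ past $T_{\mathrm{SCI}}$, turning the input into $D_0(\nu)\odot\omega^{(1)}_{k_1}\odot\ldots\odot\omega^{(N)}_{k_N}$.

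Since $D_0(\nu)$ sits inside the classical cone generated by the MCI-frame of $\mathcal{S}$, I can expand it as $\sum_i \lambda_i\omega_i^{(0)}$ with $\lambda_i\geq 0$ summing to one. Linearity of $T_{\mathrm{SCI}}$ together with the already-established action on classical product states reduces the calculation to a sum of delta functions times $\lambda_i$, yielding exactly $\lambda_{j_0}\delta_{j_1,j_0+k_1}\cdots\delta_{j_N,j_0+k_N}$. Recognising $\lambda_{j_0}=e_{j_0}^{(0)}(D_0\nu)=e_{j_0}^{(0)}(\nu)$ (using once more that $e_{j_0}^{(0)}\in E\circ D_0$) matches Eq.~\eqref{eq:DarwinGPT} and completes the proof.

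The main subtlety I expect is the careful bookkeeping with $D_{0\ldots N}$: one must be sure that both the effect $e_{j_0}^{(0)}\odot\ldots\odot e_{j_N}^{(N)}$ and the frame elements on the environment side are simultaneously invariant under $D_{0\ldots N}$, so that inserting $D_{0\ldots N}$ into the expression is a free move, and that \cref{LemCommute} applies to $T_{\mathrm{SCI}}$ because it is precisely a lift of a classical transformation. The remaining steps are a routine linear expansion and do not require any further structural assumption beyond those already present in the decoherence framework.
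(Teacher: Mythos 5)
Your proposal is correct and follows essentially the same route as the paper's own proof: obtain the composite decoherence map from item 4, note it fixes the product frame states, lift the permutation via item 3 to get $T_{\mathrm{SCI}}$ (giving robust spreading immediately), and then for arbitrary $\nu$ insert $D_{0\ldots N}$ using invariance of the classical product effect, commute it past $T_{\mathrm{SCI}}$ via \cref{LemCommute}, and expand $D_0(\nu)$ over the frame. The subtlety you flag — that the product effect and the environment frame states must both be invariant under $D_{0\ldots N}$ — is exactly the point the paper's computation relies on as well.
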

Composite systems in quantum theory are unrestricted and satisfy strong symmetry (on states and effects). 
Furthermore, they admit MCI-frames arising from decoherence in the way specified above. 
Thus, the existence of an {idealized} Darwinism process and the robust spreading of classical information in quantum theory follow both as special cases of Theorem~\ref{TheSSImpliesD} and Theorem~\ref{TheDecImpliesD}. 
Apart from standard complex quantum theory, quantum theory over the real numbers is an example of a GPT that also abides by these requirements (and is not tomographically local).

\subsection{Darwinism in Spekkens' Toy Model}
\label{sec:STM}
If one identifies too many specific restrictions on a GPT, it raises the natural question: ``is quantum theory (or some subtheory thereof, such as real quantum theory) the only physical theory that allows for Darwinism?''
We answer this in the negative by providing an example that admits Darwinism, but is not quantum theory: Spekkens' Toy Model (STM)~\cite{Spekkens_ToyModel}. 

STM satisfies many of the same restrictions as quantum theory, such as no-signalling and no-cloning,
 and emulates many quantum behaviours such as complementary measurements, interference, entanglement (and monogamy thereof), and teleportation~\cite{Spekkens_ToyModel}.
Despite this, it is very different from quantum theory: both mathematically and conceptually, since at its core it is a classical {local} hidden-variable model.
What enables this quantum-like behaviour is that the states of maximum knowledge of the system are subject to the {\em epistemic restriction} that one knows only half of the possible information about the hidden {\em ontic} variable,
 along with a measurement-update rule that ensures that this restriction is maintained even when one makes sequential measurements on the system.

A more detailed description of STM and its extension into the GPT framework is given in \cref{app:Spekkens}.
For now, it suffices to remark that the composition of such systems is achieved by composing the underlying hidden classical variable (i.e.\ by Cartesian product) and applying the epistemic restriction to both the composite system and every subsystem thereof.

As observed by \citet{Pusey_Stabilizer} (and recounted in \cref{app:SpekkensStab}), the states within STM may be treated very similarly to the stabilizer subset of quantum theory (for a single system, the state spaces are isomorphic).
In particular, a single elementary STM system admits three ``toy observables'' $X$, $Y$ and $Z$ which act on the state to produce outputs $+1$ or $-1$ -- and there is one pure state for each of these six possibilities ($\ket{x\pm}, \ket{y\pm}, \ket{z\pm}$) and no other pure states.
When the ``wrong'' observable acts on a pure state (e.g.\ acting on $\ket{z+}$ with $X$), outcomes $+1$ and $-1$ occur with equal probability.
In this language, one can define the CNOT analogue for two STM bits ``control'' $C$ and ``target'' $T$:
 \begin{align}
	\mathrm{CNOT}: & X_C \mapsto X_C X_T, \ X_T \mapsto X_T, \nonumber \\
	& \quad Z_C \mapsto Z_C,\  Z_T \mapsto Z_C Z_T.
\end{align}
This can be read as, e.g.\ $X_C\mapsto X_C X_T$, ``The product of the observation of $X$ on $C$ and $X$ on $T$ after the transformation $\mathrm{CNOT}$ yields the same outcome statistics as the observation $X$ on $C$ before the transformation.''

With this shorthand, we hence specify our candidate for an ideal Darwinism process from main system $S$ onto multiple environments $E_1,...,E_N$:
\begin{align}
\label{eq:stabCNOTmany}
	\mathrm{FAN}: & \quad X_S \mapsto X_S X_{E_1}...X_{E_n}, \quad \forall k:  X_{E_k} \mapsto X_{E_k}, \nonumber \\
	& \quad Z_S \mapsto Z_S, \quad \forall k: Z_{E_k} \mapsto Z_S Z_{E_{k}}.
\end{align}
The validity of this, as a transformation in STM, can be verified in one of two ways:
 the first is to consider a direct implementation of this as a series of pairwise CNOT gates (in the manner of \cref{fig:fanout}),
 reasoning (e.g.\ via category theory~\cite{Coecke_STM}) that such composition is permissible.
The second way is to note that this map is admissible as a transformation on an analogously defined $N$-bit quantum stabilizer system,
 and then use the result of \citet{Pusey_Stabilizer} to infer that this makes ${\rm FAN}$ a valid STM transformation.

Thus, it remains to verify that such a transformation indeed achieves the desired ideal Darwinistic behaviour and robustly spreads classical information.
Suppose we have an initial state of the form $\ket{\psi}_S\otimes\ket{{z+}}^{\otimes n}_{E_1 ... E_N}$ where $\ket{\psi}_S$ is some arbitrary pure STM bit state of the main system, and $\ket{z+}$ corresponds to the state that always gives output $+1$ when measured by toy observable $Z$.
As for each $k$, FAN maps $Z_{E_k}$ to $Z_S Z_{E_k}$, the final state will always have result $+1$ for joint measurements of $Z_S Z_{E_k}$ -- {mandating that} the results of $Z_S$ and $Z_{E_k}$ are perfectly correlated. (In the case $E_j$ starts at $\ket{z-}$, anti-correlation is established.)
Therefore the fan-out results in all observers seeing the same outcome as made on the original system. 

Our other requirement for Darwinism is that the outcome probability of $Z_S$ is not changed, and this is also explicitly given by the rule in the map $Z_S\mapsto Z_S$.
In particular $\ket{z\pm}$ are the only pure states that have non-zero expectation value for the observable $Z$,
 and the map does not take any state of main system stabilized by another observable (i.e. $X$ or $Y$) to any state stabilized by an expression containing $Z$.
As such, since $S$ can only be in one of these possibilities (or convex combination thereof in the GPT extension) this implies that the statistics of $Z_\mathcal{S}$ remain unchanged. 

Finally, this transformation also enables robust spreading of classical information. Indeed, allow some of the environmental systems to be initialized in $\ket{z-}_{E_k}$ instead of $\ket{z+}_{E_k}$. Then, the conditions $Z_S\mapsto Z_S$ and $(-) Z_{E_k}\mapsto (-) Z_SZ_{E_k}$  imply that, if the system is initialized in $\ket{\psi}_S=\ket{z+}$ no change occurs in the total state;  if $\ket{\psi}_S=\ket{z-}$, then the state of the system is also kept unchanged but all environmental subsystems flip sign.
We summarize this with our final theorem of the article:
\begin{theorem}[STM admits an ideal Darwinism process]
\label{thm:STM}
The FAN operation specified in \cref{eq:stabCNOTmany} implements an {idealized} Darwinism process, as per \cref{def:DarwinGPT}, and robustly spreads classical information (definition \ref{def:SpreadingCI}).
\end{theorem}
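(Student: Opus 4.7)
The plan is to verify the two claims separately, using the observable-level rules in \cref{eq:stabCNOTmany} together with the epistemic-restriction structure of STM reviewed in \cref{app:Spekkens} and \cref{app:SpekkensStab}. First I would fix the MCI-frames: for each system in $\{\mathcal{S},\mathcal{E}_1,\ldots,\mathcal{E}_N\}$, take $\omega_0 := \ket{z+}$ and $\omega_1 := \ket{z-}$ with distinguishing refined measurement $\{e_0,e_1\}$ being the two outcomes of the toy observable $Z$; these pure states are perfectly distinguished by these pure effects, so \cref{def:MCIframe} holds with $d=2$. The admissibility of FAN as a reversible STM transformation would be established in two ways: either by noting that it is the iterated composition of pairwise CNOTs, each of which is an STM gate by the Heisenberg-style rules in the text, or by invoking Pusey's correspondence with the $(N{+}1)$-bit stabilizer subtheory.

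Next, for the robust-spreading claim (\cref{def:SpreadingCI}), I would compute the action of FAN on an arbitrary product frame state $\omega^{(0)}_{j_0}\odot\omega^{(1)}_{j_1}\odot\cdots\odot\omega^{(N)}_{j_N}$ via the Heisenberg picture. The input is a joint eigenstate of $Z_{\mathcal{S}}$ with eigenvalue $(-1)^{j_0}$ and of each $Z_{\mathcal{E}_k}$ with eigenvalue $(-1)^{j_k}$. Since FAN sends $Z_{\mathcal{S}}\mapsto Z_{\mathcal{S}}$ and $Z_{\mathcal{E}_k}\mapsto Z_{\mathcal{S}}Z_{\mathcal{E}_k}$, the post-FAN state is the unique joint eigenstate of $Z_{\mathcal{S}}$ with eigenvalue $(-1)^{j_0}$ and of each $Z_{\mathcal{E}_k}$ with eigenvalue $(-1)^{j_0+j_k}$, matching the right-hand side of \cref{eq:DarwinismStrong}.

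For the idealized-Darwinism claim (\cref{def:DarwinGPT}), the probability identity must hold for every pure $\nu \in \Omega_{\mathcal{S}}$; convex combinations then extend the conclusion to all normalized states. The case $\nu \in \{\omega_0,\omega_1\}$ reduces immediately to the previous paragraph together with $e_{j_0}^{(0)}(\nu) \in \{0,1\}$. For the remaining pure states $\nu \in \{\ket{x\pm},\ket{y\pm}\}$, I would perform the same stabilizer analysis: the joint initial state is also stabilized by $\pm X_{\mathcal{S}}$ (respectively $\pm Y_{\mathcal{S}}$) and by $(-1)^{k_i}Z_{\mathcal{E}_i}$, which after FAN become $\pm X_{\mathcal{S}}X_{\mathcal{E}_1}\cdots X_{\mathcal{E}_N}$ (respectively the $Y$-analogue) and $(-1)^{k_i}Z_{\mathcal{S}}Z_{\mathcal{E}_i}$. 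Reading off the $Z$-measurement marginals on every party then yields perfectly correlated outcomes with $j_i = j_0 + k_i \pmod 2$ and $j_0$ uniform in $\{0,1\}$, which is exactly $\delta_{j_1,j_0+k_1}\cdots\delta_{j_N,j_0+k_N}\cdot e_{j_0}^{(0)}(\nu)$ since $e_{j_0}^{(0)}(\nu) = 1/2$ for these states.

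The main obstacle is this last step: \cref{def:DarwinGPT} demands correct operational behaviour on \emph{every} pure $\nu$, not merely on MCI-frame states, and the non-$Z$ pure states of STM do not lie in the classical subsystem generated by the MCI-frame. The argument leans crucially on the fact that FAN maps $X_{\mathcal{S}}$ and $Y_{\mathcal{S}}$ into \emph{joint} multi-party Pauli strings that remain compatible with $Z$-measurement so as to produce the required correlations with uniform marginals. Once those stabilizer-level identities are checked for each of the six pure STM states, nothing further is needed beyond combining them with the convexity of the STM state space to cover all $\nu\in\Omega_{\mathcal{S}}$.
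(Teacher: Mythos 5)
Your proposal is correct and follows essentially the same route as the paper: the same choice of $Z$-frame as the MCI-frame, the same two ways of certifying FAN as a valid STM transformation (pairwise CNOT composition or Pusey's stabilizer correspondence), and the same Heisenberg-picture verification that $Z_{\mathcal S}\mapsto Z_{\mathcal S}$ and $Z_{\mathcal E_k}\mapsto Z_{\mathcal S}Z_{\mathcal E_k}$ yield the required correlations and preserved marginals, extended to all of $\Omega_{\mathcal S}$ by linearity. Your explicit case analysis over the six pure states and the direct computation of the joint outcome distribution from the post-FAN stabilizer group is just a slightly more detailed rendering of the paper's argument, not a different one.
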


We conclude this section with some remarks on the implications of this example to the theorems of this paper.
First, STM is nonclassical in that it violates \cref{def:qcMCI}: e.g.,\ for the MCI--frame distinguishing measurement $Z$, the pure states $\ket{x+}$, $\ket{x-}$, $\ket{y+}$ and $\ket{y-}$ do not give a deterministic response. Thus, as per Theorem~\ref{theorem:DneedsEntangledStates}, the FAN operation creates entangled states from separable ones. E.g.,\ for a system and single environmental bit, the FAN (i.e.\ CNOT) gate takes the separable state $\ket{x+}_S\ket{z+}_E$ to a ``maximally entangled'' state, where the ontic state of the two systems is guaranteed to be the same (i.e.\ the same outcomes will be observed if the same measurement is made on both systems).
STM is also nonclassical in the sense of \cref{def:inequivalent}, in that there are more than one set of sufficiently different refined measurements, and indeed also STM has entangled effects as mandated by Theorem~\ref{thm:EntangledEffects}.
Although this demonstrates the necessity of entangled effects in a non-classical setting, we can further conclude (by counterexample) that the stronger condition of violating Bell inequalities (see e.g.~\cite{Brunner_Bell}) is {\em not} necessary since STM does not violate these. 
A similar conclusion follows for \emph{contextuality}, which is not present in STM \cite{Spekkens_ToyModel} and thus shown to be unnecessary for Darwinism.

Secondly, in terms of the sufficient conditions, STM neither admits a decoherence map, nor is it strongly symmetric (as we show in \cref{app:SpekkSuff}).
This illustrates that the sufficient conditions are not tight -- they enable the fan-out dynamic by mandating the existence of {\em all} classical dynamics within the theory.
However, the fan-out operation can be admitted without requiring universal classical computation -- indeed, as above for STM, or existing as a member of the (non-universal~\cite{Gottesman_stab}) Clifford group in the case of quantum stabilizers.

\section{Conclusions and perspectives}
Quantum Darwinism provides a mechanism through which crucial aspects of classicality can be understood to emerge in the quantum domain~\cite{Zurek_QuantumOrigins,Zurek_QDReview1,Zurek_QDReview2,Brandao_QD,Knott_QDInfiniteDimension}. 
In this article, we generalized an {idealized} notion of Darwinism, where maximal classical information is perfectly broadcast to an environment split into fractions, to the framework of GPTs. 
We showed that entanglement, in both states and measurements, is a necessary feature for such a process to be present in generalized theories, and demonstrate{d} that some important physical principles -- like strong symmetry and decoherence -- provide sufficient structure to admit Darwinism. 
Finally, we described {a mechanism for} Darwinism in Spekkens Toy Model, showing that {such broadcasting of classical information is not unique to} quantum theory{, and, moreover, that our sufficient conditions are not tight}. 

Our results show that objectivity may arise through a Darwinism process in non-classical theories other than quantum {theory} -- adding to the results of \citet{Scandolo_ObjectivityGPT}, which analyzed objectivity through State Spectrum Broadcast in GPTs. 
Complementing a previous result on decoherence~\cite{Richens_DecoherenceGPTs}, our work also shows the important role of entanglement to allow for emergence of classicality, suggesting the counterintuitive principle that locally non-classical theories must also allow for shared non-classicality to allow for the emergence of classical objectivity.
In addition, our results show that strongly symmetric and unrestricted GPTs -- that is, those endowed with sufficient structure to allow for reversible classical computation and the encoding and decoding of classical information -- have sufficient structure for Darwinism to be present.

An important outlook is whether additional features often associated with Quantum Darwinism can also be generalized into this framework, or whether our results also hold for Darwinist processes with less structure than the idealized fan-out transformations considered here.
For instance, can we prove similar results if one requires only one initial state of the environment to be a good register (i.e., for minimal Darwinism)? 
We know this to be the case for Theorem~\ref{theorem:DneedsEntangledStates} (see the remark in this theorem), but this is not yet shown for the other results. 
Additionally, in the spirit of ref.~\cite{Zwolak_NoisyChannels}, what happens if the initial state of the environment is not able to perfectly encode the central system's information? 
One could also maintain such idealized features but look into the case in which the classical information that spreads to the environment is non-maximal, as a GPT analog to quantum dynamics that lead to decoherence-free subspaces.

Finally, although this work has been presented with a focus on the origins of classical limits, our results also have a bearing on the general foundations of computation~\cite{Lee_GPTComp,Garner_IXComp}.
The Darwinism--enabling fan-out transformation (\cref{eq:QPointerStates}) has its origins in classical logic circuits, connecting the output of one logic gate to the input of many others, and its quantum analogue plays a role in the design of quantum neural networks~\cite{Wan_NN}. The conclusions of this article therefore imply that such computation also necessitates the existence of entanglement, if the theory is not strictly classical 
 -- meanwhile identifying potential sufficient structure (e.g.\ no-restriction and strong symmetry) to guarantee that such computation can be performed.

\section*{Acknowledgments}
RDB acknowledges funding by S\~{a}o Paulo Research Foundation -- FAPESP, through scholarships no.\ 2016/24162-8 and no.\ 2019/02221-0. RDB is also thankful to Marcelo Terra Cunha for insightful ideas and discussions, and to IQOQI for the hospitality during the time as guest researcher. MK acknowledges the support of the Vienna Doctoral School in Physics (VDSP) and the support of the Austrian Science Fund (FWF) through the Doctoral Programme CoQuS. MK, AJPG and MPM thank the Foundational Questions Institute and Fetzer Franklin Fund, a donor advised fund of Silicon Valley Community Foundation, for support via grant number FQXi-RFP-1815. This research was supported in part by Perimeter Institute for Theoretical Physics. Research at Perimeter Institute is supported by the Government of Canada through the Department of Innovation, Science and Economic Development Canada and by the Province of Ontario through the Ministry of Research, Innovation and Science.

\bibliographystyle{unsrtnat}
\bibliography{gptd}

\bigskip

\newpage
\appendix
\section*{Appendix}
\section{The pentagon state space}
\label{app:Examples}

We present an example of a state space~\cite{Massar_InfoPolygon} (brought to our attention in \citet{Janotta_PolygonSpaces}) without an MCI-frame (\cref{def:MCIframe}), and illustrate its counterintuitive properties.
\begin{example}[Pentagon state space]
\label{ex:Pentagon}
Consider a GPT system with states in $A=\mathbb{R}^3$ such that $\Omega_A$ is a regular pentagon (with pure states being the vertices), and with a dual space of effects $E_A$ subject to the no-restriction hypothesis.
Such a system admits a {\em self-dual} identification between $A_+$ and $E_A$ in the following sense: for each vertex $\omega_j$, there is a unique related effect $e_j\in E_A$ with $e_j\leq u_A$ such that $e_j(\nu)=1 \Rightarrow \nu=\omega_j$; 
 that is, these effects are in one-to-one correspondence with the vertices -- and those are exactly the {pure} effects. 
\end{example}

\begin{figure}[tbh]
    \centering
    \includegraphics[width=0.65\linewidth]{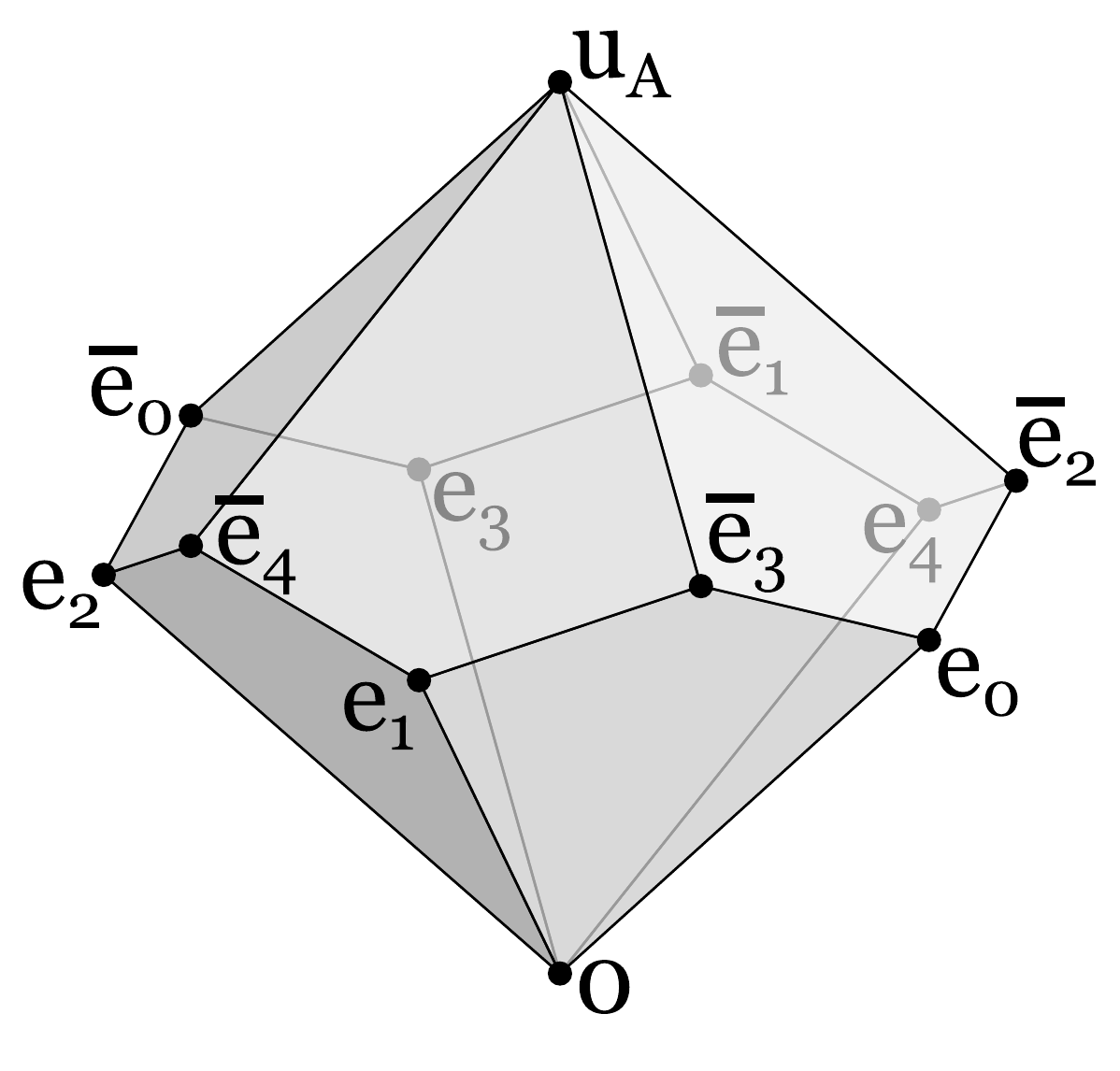}
    \caption{{
    \caphead{The pentagon effect space.}
    The pure effects (sitting on the extremal rays of the cone) are labeled $\{e_i\}$, and the unit effect is $u_A$. The complementary effects $\{\bar{e}_i := u_A - e_i\}$ are not pure, nor are they are convex combinations of $\{e_i\}$ -- rather, they are linear combinations.}}
    \label{fig:pentagon}
\end{figure}

\begin{figure}[tbh]
    \centering
    \includegraphics[width=0.625\linewidth]{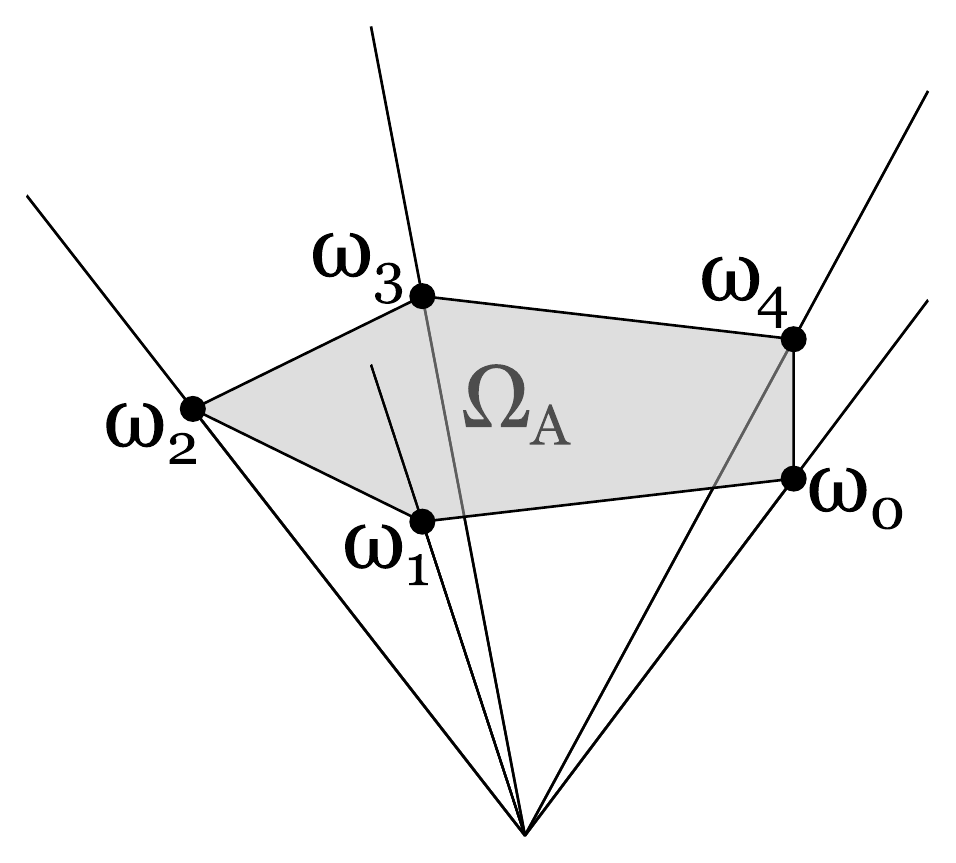}
    \caption{
    \caphead{The pentagon state space.}
    Normalized states $\Omega_A$ are shaded gray.
    The pure states (labeled $\{\omega_i\}$) correspond to the vertices of a regular pentagon.}
    \label{fig:pentagonS}
\end{figure}

Consider the effect space of this theory {(drawn in figure \ref{fig:pentagon})} where we have labeled the {pure effects} clockwise. 
Consider also the set of states (drawn in figure \ref{fig:pentagonS}), labeled in a similar way.
The maximal frame is of size two: any pair of {pure state} {whose absolute difference between indices is $2$ (modulo $5$) lie on ``opposite'' sides of the pentagon, and} form such a frame. 
Then, both {$\{\omega_0,\omega_2\}$} and {$\{\omega_0,\omega_3\}$ } are maximal frames; their states are distinguished, for example, by $M_1=\{e_0,\bar{e}_0\}$. 
However $\bar{e}_0$ is {\em not {pure}}: {it is possible to show (calculating explicitly using the representation described in \citet{Janotta_PolygonSpaces}) that} $\bar{e}_0= \alpha(e_2+e_3)$ with $\alpha{ = \frac{{\rm sec}(\pi/5)}{2}}\approx 0.6180 > 1/2$. {Thus, one can} also perform the refined measurement $M_2=\{e_0,\alpha e_2,\alpha e_3\}$ to distinguish the states in either frame $\{\omega_0, \omega_2\}$ or $\{\omega_0, \omega_3\}$.
{Here, if $\omega_0$ is prepared and measurement $M_2$ made, the outcome associated with $e_0$ will always be measured.
However when $\omega_2$ is prepared and $M_2$ measured, with probability $\alpha$ one will get the outcome associated with $e_2$, 
 and with probability $1-\alpha$ the (``incorrect'') outcome associated with $e_3$ (but one never gets the outcome associated with $e_0$).
(Similarly, when $\omega_3$ is prepared and $M_2$ measured, $e_3$ occurs with probability $\alpha$ and $e_2$ otherwise).
 }

Suppose someone is promised to receive, with probability ${\rm P}(i)$, the state {$\omega_i$} from {the set $\{\omega_0, \omega_2, \omega_3\}$} and should guess the value of $i$. Then, the probability of success when using measurement $M_1=\{e_0,\bar{e}_0\}$ is given by
\begin{equation}
     p_{\rm success}^{M_1}= {\rm P}(0) + (1/2)[{\rm P}(2)+{\rm P}(3)],
\end{equation}
since one can always guess correctly if the outcome related to $e_0$ clicks but must make a random guess between $i=2$ or $i=3$ if the other outcome clicks.
However, by using $M_2=\{e_0,\alpha e_2,\alpha e_3\}$ {the success probability is}
\begin{equation}
    p_{\rm success}^{M_2} = {\rm P}(0) + \alpha[{\rm P}(2)+{\rm P}(3)]>p_{\rm success}^{M_1},
\end{equation}
since $\alpha>1/2$. 
We see that the refined measurement $M_2$ allows for a higher probability of distinguishing between a set of states which is larger than the {size of the maximal frame}. In other words, the refined measurement $M_2$ can distinguish slightly more than $1$ bit, even though the maximal frame has size $2$ and this measurement $M_2$ coarse-grains to the distinguishing measurement $M_1$. 
If one understands coarse-graining as erasing of classical information, $p^{M_2}_{\rm success} >p^{M_1}_{\rm success}$ suggests that there was more classical information available than can be encoded onto a maximal frame. 
Such a phenomenon occurs for every unrestricted GPT built from a polygon state space with an odd number of vertices {(see also \citet{Massar_InfoPolygon}).}
This difference between the amount of classical information that can be encoded into a GPT system and the size of a maximal frame is a violation of a principle that has been called ``No Simultaneous Encoding''~\cite{Masanes_InfoUnit}.
By explicitly only allowing MCI--frames (\cref{def:MCIframe}) to characterize the classical information to be spread by an {idealized} Darwinism process (\cref{def:DarwinGPT}),
 we ensure that no such over--encoding occurs in the systems considered in this article.

\section{Pure separable objects}
\label{app:SepEffect}
\begin{lemma}
\label{LemPureProductEffects}
A pure effect is separable if and only if it is a product of pure effects.
\end{lemma}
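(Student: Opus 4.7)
The ``if'' direction is essentially free from the assumptions already collected. Suppose $e^A = e^{A_1}\odot\ldots\odot e^{A_N}$ with each $e^{A_j}$ pure. This is a (one-term) separable decomposition, so $e^A$ is separable. Purity is then item~iv of Definition~\ref{def:Composition}, which states explicitly that products of pure effects are pure. So only the ``only if'' direction requires work.

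For the ``only if'' direction, suppose $e^A\in E_A$ is pure (and nonzero) and separable, with decomposition $e^A = \sum_i e_i^{A_1}\odot\ldots\odot e_i^{A_N}$. Each summand lies in $E_A$ by item~ii of Definition~\ref{def:Composition}, so purity of $e^A$ forces each summand to be proportional to $e^A$. Picking any nonzero summand and absorbing the positive proportionality constant into one tensor factor, I obtain
\begin{align}
e^A = f^{A_1}\odot f^{A_2}\odot\ldots\odot f^{A_N}
\end{align}
for some effects $f^{A_j}\in E_{A_j}$, none of which is zero (otherwise $e^A=0$ by bilinearity of $\odot$). It remains to show each $f^{A_j}$ is itself pure.

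The heart of the argument is a separating-state trick. Suppose for contradiction that $f^{A_1}$ is not pure, so there exist $g,h\in E_{A_1}$ with $f^{A_1}=g+h$ and $g\not\propto f^{A_1}$. By bilinearity of $\odot$,
\begin{align}
e^A = \bigl(g\odot f^{A_2}\odot\ldots\odot f^{A_N}\bigr) + \bigl(h\odot f^{A_2}\odot\ldots\odot f^{A_N}\bigr),
\end{align}
and both terms lie in $E_A$ by item~ii. Purity of $e^A$ gives $g\odot f^{A_2}\odot\ldots\odot f^{A_N}=\lambda\,e^A$ for some $\lambda\geq 0$. Now for each $j\geq 2$, since $f^{A_j}\neq 0$ and $A_+^{A_j}$ spans $A^{A_j}$, I can pick a normalized state $\omega_0^{A_j}\in\Omega_{A_j}$ with $f^{A_j}(\omega_0^{A_j})>0$. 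Evaluating the previous identity on $\omega\odot\omega_0^{A_2}\odot\ldots\odot\omega_0^{A_N}$ for arbitrary $\omega\in A_1$, and using the factorization property (item~iii), yields
\begin{align}
g(\omega)\cdot c \;=\; \lambda\, f^{A_1}(\omega)\cdot c, \qquad c := \prod_{j\geq 2} f^{A_j}(\omega_0^{A_j}) > 0.
\end{align}
Dividing by $c$ and noting the equality holds on a spanning set of $A_1$ gives $g=\lambda f^{A_1}$, contradicting $g\not\propto f^{A_1}$. Thus $f^{A_1}$ is pure, and the identical argument applied to any other factor shows each $f^{A_j}$ is pure.

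The one subtle point I expect to be the main obstacle is the reduction from a multi-term separable expression to a single product representative. This relies crucially on purity: purity forces all terms to be proportional to $e^A$, so the separable sum collapses to a scalar multiple of a single product. Once that step is made, the rest is multilinear algebra (plus the factorization axiom iii to ``probe'' each tensor slot independently). For multipartite composites $N\geq 3$, the above argument goes through verbatim under any fixed bracketing chosen via item~vii of Definition~\ref{def:Composition}, since bilinearity in each slot is all that is used.
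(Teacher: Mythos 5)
Your proof is correct and follows essentially the same route as the paper's: purity collapses the separable sum to a single product of local effects, and then purity of each factor is deduced by contradiction. The only difference is that you spell out (via the separating-state evaluation on product states) why a non-trivial decomposition of a local factor yields a non-proportional decomposition of the global effect, a step the paper's proof asserts without detail.
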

\begin{proof}
Only one direction is non-trivial: suppose that the effect $e^{1,2,\ldots,N}$ is separable, then it can be written
\begin{align}
   e^{1,2,\ldots,N}&=\sum_i e_i^{(1)}\odot\ldots\odot e_i^{(N)}
\end{align}
where the $e_i^{(j)}$ are suitable local effects. Since $e^{1,2,\ldots,N}$ is pure, we must have $e_i^{(1)}\odot\ldots\odot e_i^{(N)}\propto e^{1,2,\ldots,N}$ for all $i$. Hence these product effects are all multiples of each other, and $e^{1,2,\ldots,N}=e^{(1)}\odot\ldots\odot e^{(N)}$ for suitable local effects $e^{(j)}$. If we could non-trivially decompose any of the $e^{(j)}$, then we could decompose $e^{1,2,\ldots,N}$, which would contradict its purity.
\end{proof}
{
For completeness, we also show that pure separable states are products of pure states.
\begin{lemma}
\label{LemPureProductStates}
A pure state is separable if and only if it is a product of pure states.
\end{lemma}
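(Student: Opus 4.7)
The plan is to mirror the structure of the proof of \cref{LemPureProductEffects}, with two conceptual steps: (i) use purity (i.e.\ extremality in $\Omega_{A_1\ldots A_N}$) to collapse the separable decomposition to a single product term, and (ii) use the properties of the composition to show that each local factor of that product must itself be pure. The nontrivial direction is: if a pure state $\omega^{1,\ldots,N}\in\Omega_{A_1\ldots A_N}$ is separable, then it is a product of pure states.

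For step (i), start with a separable decomposition $\omega^{1,\ldots,N}=\sum_i p_i\,\omega_i^{(1)}\odot\ldots\odot\omega_i^{(N)}$ from \cref{def:EntangledStates}. By iterated application of item i of \cref{def:Composition}, each $\omega_i^{(1)}\odot\ldots\odot\omega_i^{(N)}$ is a normalized state, so this is a genuine convex decomposition of $\omega^{1,\ldots,N}$ in $\Omega_{A_1\ldots A_N}$. Purity (extremality) then forces every term with $p_i>0$ to equal $\omega^{1,\ldots,N}$, giving $\omega^{1,\ldots,N}=\omega^{(1)}\odot\ldots\odot\omega^{(N)}$ for some normalized local states $\omega^{(j)}$.

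For step (ii), I would argue one factor at a time, say $\omega^{(1)}$, and assume for contradiction a nontrivial convex decomposition $\omega^{(1)}=\lambda\nu_1+(1-\lambda)\nu_2$ with $\nu_1,\nu_2\in\Omega_{A_1}$ and $\lambda\in(0,1)$. Bilinearity of $\odot$ at each level of the fixed iterated composition tree yields
\begin{align}
\omega^{1,\ldots,N}=\lambda\bigl(\nu_1\odot\omega^{(2)}\odot\ldots\odot\omega^{(N)}\bigr)+(1-\lambda)\bigl(\nu_2\odot\omega^{(2)}\odot\ldots\odot\omega^{(N)}\bigr),
\end{align}
which, by item i, is again a convex decomposition of the pure state $\omega^{1,\ldots,N}$. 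Extremality then forces $\nu_1\odot\omega^{(2)}\odot\ldots\odot\omega^{(N)}=\nu_2\odot\omega^{(2)}\odot\ldots\odot\omega^{(N)}$.

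The final key step, where the main subtlety lies, is concluding $\nu_1=\nu_2$ from this equation. Here I would use item ii of \cref{def:Composition} to note that for any effect $e\in E_{A_1}$ the product $e\odot u^{A_2}\odot\ldots\odot u^{A_N}$ is a valid effect on $A_1\ldots A_N$, and then apply item iii (statistical independence on product states) iteratively to obtain $e(\nu_1)\cdot 1\cdots 1=e(\nu_2)\cdot 1\cdots 1$, i.e.\ $e(\nu_1)=e(\nu_2)$ for every $e\in E_{A_1}$. Since $E_{A_1}$ is a generating cone in $A_1^*$, it spans $A_1^*$, so this equality of linear functionals on $A_1$ forces $\nu_1=\nu_2$, contradicting nontriviality of the convex decomposition. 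Hence $\omega^{(1)}$ is pure; the identical argument applied at each other tensor slot shows every $\omega^{(j)}$ is pure. The converse direction is immediate from the assumption, stated in the text right before the shaded summary, that products of pure states are pure. The main obstacle to keep track of is just bookkeeping with the iterated composition tree to justify the bilinearity and normalization steps, but no new structural input beyond \cref{def:Composition} is required.
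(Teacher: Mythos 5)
Your proof is correct and follows essentially the same route as the paper's: extremality collapses the separable decomposition to a single product term, and non-purity of any local factor would yield a nontrivial convex decomposition of the global pure state. The only difference is that you spell out --- via local effects of the form $e\odot u^{A_2}\odot\ldots\odot u^{A_N}$, statistical independence, and the fact that $E_{A_1}$ spans $A_1^*$ --- why that induced decomposition is genuinely nontrivial, a detail the paper's proof leaves implicit.
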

\begin{proof}
Again, only one direction is non-trivial: suppose $\omega ^{1,2,\ldots,N}$ is separable. Then, 
\begin{align}
    \omega^{1,2,\ldots,N}=\sum_i p_i \omega_i^{(1)}\odot\ldots\odot \omega_i^{(N)},
\end{align}
with $\omega_i^{(j)}$ local states and $(p_i)_i$ a probability distribution that, without loss of generality, satisfies $p_i>0$ for all $i$. Since $\omega^{1,\ldots,N}$ is pure, $(\omega^{(1)}_i\odot\ldots\odot\omega^{(N)}_i)_i$ must be all equal to each other. Thus, $\omega^{1,2,\ldots,N}=\omega^{(1)}\odot\ldots\odot\omega^{(N)}$. Again, if any $\omega^{(j)}$ were a non-pure state, we could decompose $\omega^{1,2,\ldots,N}$ non-trivially, which would be a contradiction.
\end{proof}
 }
 
 \section{Quasi-classical MCI-frames in n-gon GPTs}
 \label{App:n-gonGPTs}

{
Here we show that existence of quasi-classical MCI-frames is rare among a class of GPTs, namely, the $n$-gon theories~\cite{Janotta_PolygonSpaces}. These theories are constructed within $\reals^3$. 
The normalized state space $\Omega$ of each GPT is a regular polygon with $n$ vertices, and their effect cones satisfy the no-restriction hypothesis. 
This class of GPTs is quite rich and interesting. 
First, taking the limit $n\rightarrow \infty$, one obtains a quantum bit over the real numbers.  
Second, one can find examples of GPTs that both obey or disobey important principles: for instance, all those with odd $n$ obey strong symmetry (and therefore are strongly self-dual \cite{Mueller_Bit-Symmetry}) as opposed to those with even $n$. 
On the other hand, those with even $n$ have MCI-frames, which those  with odd $n>3$ lack (recall \cref{app:Examples} for $n=5$, or  see \citet{Massar_InfoPolygon} in general). 
As we shall see below, even among those $n$-gon with MCI-frames, only a couple have {\em quasi-classical} MCI-frames.}
 
 {
 \begin{example}[quasi-classical MCI-frames in n-gon theories]
\label{ex:ngonquasiCl}
Consider the $n$-gon GPTs~\cite{Massar_InfoPolygon, Janotta_PolygonSpaces}, whose normalized state space is a regular polygon with $n$ vertices, and whose effect cones satisfy the no-restriction hypothesis. Among the $n$-gon theories that feature MCI-frames (i.e., those with even $n$ or $n=3$), only those with $n=3, 4$ feature quasi-classical MCI-frames (see figure \ref{figure:n-gonQuasiClassical}).
\end{example}
}
\begin{figure}[bth]
\centering
\includegraphics[width=\linewidth]{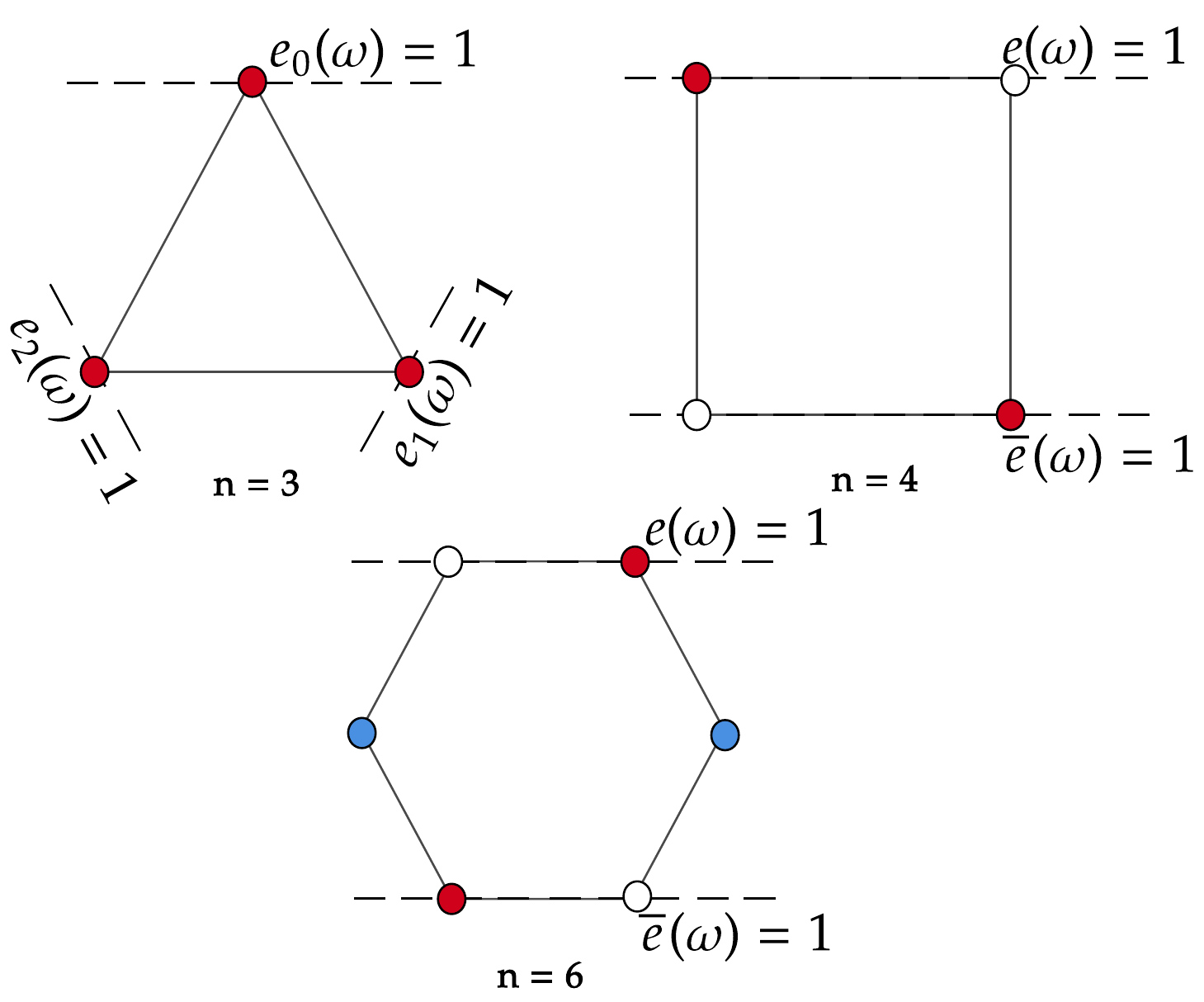}
\caption{ {\caphead{Quasi-classical MCI-frames in $n$-gon GPTs.} The red vertices are (choices of) maximal frames and the dashed lines represent the refined distinguishing measurements for that frame. White dots are pure states that assign deterministic probabilities to the distinguishing measurements and blue dots represent the remaining pure states -- which \emph{do not} assign deterministic probabilities to the MCI-frame distinguishing measurements, as they lie between the two dashed lines. For even $n\geq6$, such blue vertices appear. Therefore, the only $n$-gon GPTs that have quasi-classical MCI-frames are the classical `trit' ($n=3$) and the `gbit' ($n=4$).}
}
\label{figure:n-gonQuasiClassical}
\end{figure}
{
As we see, almost all $n$-gon theories lack quasi-classical MCI-frames. 
The unique non-classical $n$-gon theory that does carry those is the square-shaped GPT (i.e. gbits in the ``boxworld'' GPT~\cite{Barrett_GPT}), since the case $n=3$ is a classical `trit'.
This example shows that requiring quasi-classical MCI-frames -- which is a consequence of demanding idealized Darwinism without entangled states, as per Theorem~\ref{theorem:DneedsEntangledStates} -- indeed restricts the considered GPTs {quite substantially.}
}

\section{Spekkens' Toy Model}
\label{app:Spekkens}
In this appendix, we briefly review some details of Spekkens' Toy Model~\cite{Spekkens_ToyModel} (STM) and its GPT extensions~\cite{hardy_ConvexSpekkensTT,Janotta_GPTsRestriction,Garner_GPTPhase}.

\subsection{Overview}
STM is essentially a classical {local} hidden-variable model on which an {\em epistemic restriction} is imposed: no more than half the information (as measured in bits) can be known.
The simplest (and for our purposes, only) single system in this framework then consists of a so-called {\em ontic} hidden variable with four possibilities $\{1,2,3,4\}$.
Valid questions about such system can only narrow down the state to at best two possibilities (e.g.\ ``is the system in $1\lor2$ (read `1 or 2')?'') for both affirmative and negative answers to the question.
This yields three sets of mutually exclusive questions of the form ``is the system in [X]'' which we label as follows:
\begin{align}
\bra{x+} := 1 \lor 3, \qquad & \bra{x-} := 2 \lor 4, \nonumber \\
\bra{y+} := 1 \lor 4, \qquad & \bra{y-} := 2 \lor 3, \nonumber \\
\bra{z+} := 1 \lor 2, \qquad & \bra{z-} := 3 \lor 4.
\end{align}
By the rules of STM, whenever such a question is asked, the ontic state must be randomized within the supporting set of states consistent with the answer to the question.
For example, an affirmative answer to question $\bra{z+}$ will randomize the ontic state of the system to $1$ or $2$.
This randomization ensures we cannot find the exact ontic state, say, by asking two different questions in a row -- while maintaining the property that if we ask the same  question twice in a row, we will get the same answer.
Thus, one may define a set of maximum--knowledge {\em epistemic states} in one-to-one correspondence with the affirmative answer to these questions, labeled, e.g.,\ as $\ket{x+} = 1\lor 2$.
(STM also admits a ``unit'' question $u:=$ ``is the system in $1\lor2\lor3\lor4$?'' to which the answer is always affirmative;
 similarly, there is also a maximally mixed state, in which the ontic state can take any value with the same probability.)
 
The ontic state of a composite system is formed by taking the Cartesian product of each constituent system's ontic state (written for $a$ and $b$ as $ab$).
The allowed epistemic states in this context then are those that satisfy the epistemic restriction both on the entire system, and also any subsystem thereof.
Thus, a two-system epistemic state must admit at least four ontic possibilities.
In addition to the Cartesian product of single system states, this also allows for ``entangled'' states, such as $11\lor 22 \lor 33 \lor 44$, where even though the local marginal states are maximally mixed, perfect correlation is guaranteed if the same measurement is made on both systems.
On the other hand, a state such as $11\lor 12 \lor 33 \lor 44$ is forbidden.
This is because should the $\bra{z+}$ measurement on the second system be answered in the affirmative, then the first system is definitely in state $1$, which violates the epistemic restriction.
It can thus be seen that STM is {\em self--dual by construction}: every maximum--knowledge measurement outcome can be uniquely identified with a maximum--knowledge epistemic state~\cite{Pusey_Stabilizer}.

Transformations in the theory are performed by permuting the underlying hidden variable, in such a way that no valid epistemic state is taken to an invalid state.
For single systems, every permutation is valid -- but this is not the case for multipartite systems.
Since these permutations are a finite group, when searching for a transformation that achieves a desired outcome (e.g.\ exhibits Darwinism), one can (with computer assistance) exhaustively search through possible transformations to find one that achieves the desired aims -- or otherwise rule out its existence entirely~\cite{Garner_DPhil}.
However, by formalizing the similarity between STM and the stabilizer subset of quantum mechanics, \citet{Pusey_Stabilizer} enables an elegant sufficient condition for the existence of a transformation, which we will subsequently describe.

\subsection{GPT Extension}
\label{app:STM_GPT}

First, however, let us remark on the extension of STM into the GPT framework.
In particular, STM defines a discrete state space with a finite number of states -- so in order to treat it as a GPT, we must make it continuous.
This is done in the obvious way:
 we treat the questions such as ``is the system in $1\lor2$?'' as an effect, and then admit all convex combinations of such effects. 
A complete (i.e.\ at least one question answers in the affirmative for any state) and mutually exclusive (i.e.\ no more than one question answers in the affirmative) set of questions maps to a set of effects that form a normalized measurement (i.e.\ will sum to the unit effect).
Meanwhile, each set of epistemic states of maximum knowledge with no overlap in their ontic variable support (e.g. $\{1\lor2, 3\lor4\}$) form maximal frames, in which the maximum-knowledge epistemic states are extremal. 
We then allow convex combinations of such states as ``mixed'' states, yielding a theory dubbed STM--GPT. The set of allowed transformations on the theory are then defined as exactly those allowed on the (non-GPT) STM, and due to linearity, each of these uniquely extends into a transformation on the STM--GPT state space\footnote{This implies that not all symmetries of the state space of STM-GPT belong to the group of allowed transformations, {$\mathcal{T}$}. 
For instance, the rotation in the $z$-axis which permutes $\ket{y+}\mapsto\ket{x+}\mapsto\ket{y-}\mapsto\ket{x-}\mapsto\ket{y+}$ is a symmetry of the octahedron but is not an allowed transformation in the ontic state space (see figure \ref{fig:SpekkensBit}).}.

\begin{figure}[bth]
\centering
\includegraphics[width=0.65\linewidth]{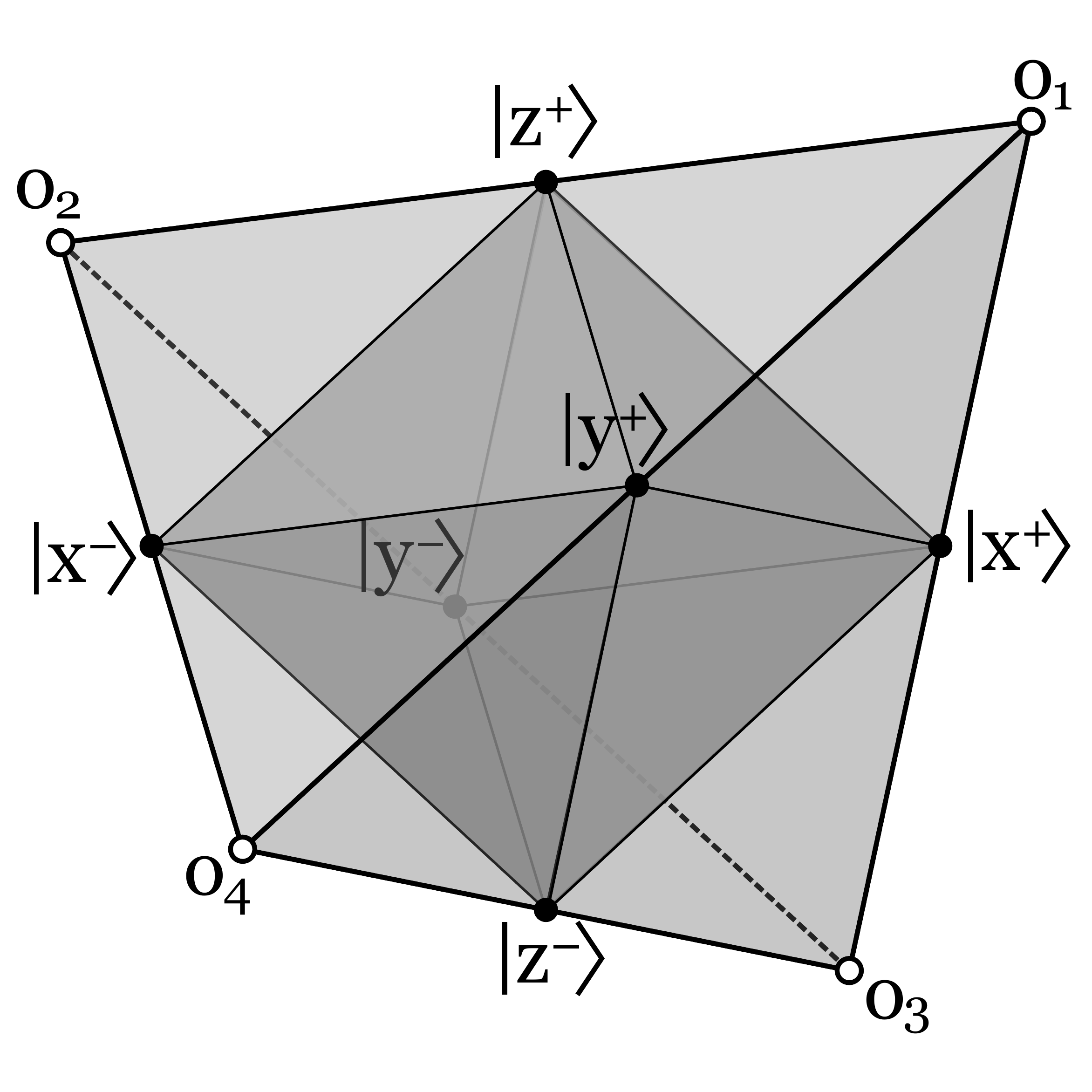}
\caption{ \caphead{Normalized states of a Spekkens' bit.}
The tetrahedron is the normalized slice of $\reals^4$ corresponding to the underlying classical ontic variable, with basis states $\{\vec{o}_1,\vec{o}_2,\vec{o}_3,\vec{o}_4\}$.
The pure epistemic states correspond to the half-way points between these ontic states.
The valid epistemic states of the theory are these states' octahedral convex hull.
\label{fig:SpekkensBit}
}
\end{figure}

One representation of a single system in STM--GPT in $\reals^4$ is to identify each {\em ontic} state with a Cartesian vector, $\vec{o}_1 := (1,0,0,0)\trans$, $\vec{o}_2 := (0,1,0,0)\trans$, $\vec{o}_3 := (0,0,1,0)\trans$, $\vec{o}_4 := (0,0,0,1)\trans$,  and then write each epistemic state $x\lor y$ as the vector $\frac{1}{2} \left( \vec{e}_x + \vec{e}_y \right)$ (see \cref{fig:SpekkensBit}).
Here $A=\reals^4$ and $\Omega_A$ is the convex combination of such (geometrically: this is the octahedron formed by connecting the midpoint of every line in a tetrahedron~\cite{Garner_GPTPhase}).
As observed in \citet{Janotta_GPTsRestriction}, the unrestricted dual of this space is cubic (i.e.\ a gbit) -- but STM does not follow the no-restriction hypothesis.
Rather, instead the space of effects can be represented by {\em exactly the same} vector space (carrying forward the self-duality-by-construction of STM),
 where the self-dualizing inner product $\langle e, \rho \rangle := 2 e \cdot \rho$ is directly proportional to the Euclidean inner product on the real vector spaces.

An analogous representation can also be formed for $n$ STM--GPT systems in $\reals^{4^n}$. 
Take the Cartesian product $\{\vec{o}_1, \vec{o}_2,\vec{o}_3,\vec{o}_4\}^{\otimes n}$ to find the set of ontic states,
 and likewise define the epistemic set as valid (as per above) mixtures thereof.
For example, $11\lor22\lor33\lor44$ is represented here as $\frac{1}{4}\left(\vec{o}_{11} + \vec{o}_{22} + \vec{o}_{33} + \vec{o}_{44}\right)$.
Meanwhile, product states of lower-dimensional STM--GPT systems are simply found by the tensor product.
For example, $1\lor2 \otimes 1 \lor 3 \equiv 11 \lor 13 \lor 21 \lor 23$ satisfies 
$\frac{1}{2}\left(\vec{o}_1 + \vec{o}_2\right) \otimes \frac{1}{2}\left(\vec{o}_1 + \vec{o}_2\right) = \frac{1}{4}\left(\vec{o}_{11} + \vec{o}_{13} + \vec{o}_{21} + \vec{o}_{23}\right)$.
This also allows for a self-dualizing inner product: $\langle \vec{e}, \vec{\rho} \rangle := {2^n} \vec{e} \cdot \vec{\rho}$.

\subsection{Stabilizer Formalism}
\label{app:SpekkensStab}
Stabilizer groups originate in group theory, but have been adapted for use in quantum theory in the context of error-correcting codes and measurement-based quantum computation,
 as they provide concise ways to describe certain high-dimensional {quantum} states.
Essentially, a transformation $T$ is said to stabilize a state $\ket{\psi}$ if $T\ket{\psi} = \ket{\psi}$~\cite{Nielsen_Chuang}.
Listing enough simultaneous stabilizing transformations may be enough to uniquely define a state (up to global phase):
 for example, the only two qubit state stabilized by both $\sigma_x\otimes \sigma_x$ and $\sigma_z\otimes\sigma_z$ is the Bell state $\ket{\Psi}=\frac{1}{\sqrt{2}}\left(\ket{00}+\ket{11}\right)$.
The {\em stabilizer subset of quantum theory} are exactly the $n$ qubit states that can be so described, when the stabilizers are taken from the Pauli group $P_n := \{\pm\id,\pm\sigma_x,\pm i\sigma_x, \pm\sigma_y,\pm i\sigma_y,\pm\sigma_z,\pm i\sigma_z\}^{\otimes n}$.

STM(--GPT) shares many similarities with (the convex hull of) quantum stabilizer states~\cite{Pusey_Stabilizer}. 
For instance, a qubit has six distinct pure qubit stabilizer states (stabilized by the Hermitian elements $\pm \sigma_x$, $\pm \sigma_y$, and $\pm \sigma_z$).
Meanwhile, for an STM bit (using the GPT representation above), we can similarly define three ``observable'' matrices:
\begin{align}
X & := \diag \left(1,-1,1,-1\right), \nonumber \\ 
Y & := \diag \left(1,-1, -1, 1\right), \nonumber \\
Z & := \diag \left(1,1,-1,-1\right),
\end{align}
such that for each measurement, there is a unique (pure) epistemic state corresponding to the $1$ and $-1$ eigenvector from each (e.g. $X \ket{x+} = \ket{x+}$) -- and this covers all pure epistemic states.
We can identify each of $X$, $Y$ and $Z$ respectively with the ontic state permutations
\begin{align}
X \leftrightarrow  3412, \quad  Y \leftrightarrow  4321, \quad Z \leftrightarrow 2143,
\end{align}
along with an identity element $I := \diag(1,1,1,1) \leftrightarrow 1234$.
Then $\{I,X,Y,Z\}$ together with matrix multiplication is the Klein four-group $V$ and is isomorphic to the permutation subgroup $\{1234,3412,4321,2143\}$.
The Cartesian product of these matrices with $\mathbb{Z}_2 = \{+1,-1\}$ forms the {\em toy stabilizer group} $G := \mathbb{Z}_2 \otimes V = \{\pm I, \pm X, \pm Y, \pm Z\}$.

Unlike the Pauli group, this group is Abelian with $XZ=ZX=Y$ (cf.\ $\sigma_x \sigma_z = - \sigma_z\sigma_x = -i\sigma_y$).
For $n$ bit systems, we denote the application of $T\in V $ to the $k^{\rm th}$ system as $T_k := I^{\otimes(k-1)} \otimes T\otimes I^{\otimes (n-k)}$.
Finally, let us define the map $m: V^n \to P^n$ that makes an obvious identification between {STM stabilizers and quantum stabilizers} (e.g.\ $m: I_1 X_2 \mapsto \id\otimes\sigma_x$).
 
Now we may use the result of \citet{Pusey_Stabilizer}: if a set of independent quantum stabilizers $m(R^1), m(R^2), \ldots m(R^k)$ describes a unique quantum state, then $R^1, R^2, \ldots, R^k$ describes a unique epistemic state in STM.
Moreover, if a map on a set of quantum stabilizers $T: m(A^1) \mapsto m(B^1), \ldots, m(A^k) \mapsto m(B^k)$ defines a unitary quantum transformation and $m(A^1)\ldots m(A^k)$ are a {\em canonical generating set},
 then $A^1 \mapsto B^1, \ldots, A^k \mapsto B^k$ defines a valid STM transformation.
The full definition of canonical generating set is complicated, but for our purposes, it suffices to note that $\{X_1, \ldots X_k, Z_1, \ldots Z_k\}$ is one such set.
With the aid of these sets, we can construct the FAN transformation (defining how it acts on each $\mathcal{X}_k/\mathcal{Z}_k$, that broadcasts information about the measurement $\{\bra{z+},\bra{z-}\}$ to the environment, (see equation~\eqref{eq:stabCNOTmany}).
\vspace{0.5cm}

\subsection{{STM is not strongly symmetric, nor does it have a decoherence map}}
\label{app:SpekkSuff}

In this section, we show that stabilizer quantum theory and (GPT-)STM fail to admit a decoherence map (in the sense of \citet{Richens_DecoherenceGPTs}, as adapted in \cref{def:DecoherenceMaps}), and similarly neither theory obeys strong symmetry.
\begin{lemma}
\label{lemma:StabilizerXDecoherence}
Stabilizer quantum states do not admit a decoherence map.
\end{lemma}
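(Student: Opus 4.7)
The plan is to derive a contradiction from condition~3 of Definition~\ref{def:DecoherenceMaps} by showing that Clifford unitaries cannot implement enough permutations on a multi-qubit MCI-frame. The group-theoretic heart of the argument is that a Clifford unitary which permutes the computational-basis states of $n$ qubits must act on them as an \emph{affine} transformation of $\mathbb{F}_2^n$, and the affine group is a strict subgroup of $S_{2^n}$ once $n\geq 3$.

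Concretely, I would first invoke condition~4 to propagate decoherence to composites. If a single stabilizer qubit admits a non-trivial decoherence map $D_1$ (which, having a two-element MCI-frame image, we may assume WLOG to project onto the computational basis by Clifford symmetry among single-qubit MCI-frames), then the $n$-qubit stabilizer system must admit a decoherence map $D_{1\ldots n}$ with $D_{1\ldots n}(\nu_1\odot\cdots\odot\nu_n) = D_1(\nu_1)\odot\cdots\odot D_1(\nu_n)$ on product states. Its image therefore contains all $2^n$ computational-basis projectors, which are extremal pure states in the convex-hull image stipulated by condition~1; since the $n$-qubit MCI-frame has exactly $2^n$ elements, the frame of condition~1 must coincide with the computational basis. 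Condition~3 then requires every permutation $\pi\in S_{2^n}$ of these states to be realized by some Clifford unitary, and the main technical step is to argue that such a Clifford $U$ is forced to implement an affine permutation $\pi(x)=Ax+b$: $U$ conjugates each shift generator $X_i$ into a Pauli that also permutes basis states without relative phases, which forces $UX_i U^{-1}=X^{a^{(i)}}$ for some $a^{(i)}\in\mathbb{F}_2^n$ and hence $\pi(x+e_i)=\pi(x)+a^{(i)}$ for all $x$. For $n\geq 3$, the affine group $\mathrm{Aff}(\mathbb{F}_2^n)$ is strictly smaller than $S_{2^n}$ (e.g.\ $|\mathrm{Aff}(\mathbb{F}_2^3)|=2^3\cdot|\mathrm{GL}_3(\mathbb{F}_2)|=1344<40320=|S_8|$), and an explicit permutation like the transposition $|000\rangle\leftrightarrow|001\rangle$ fixing the other six basis states violates affinity and thus cannot be realized, contradicting condition~3.

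The main obstacle I anticipate is ruling out decoherence maps whose image sits on a \emph{non-maximal} frame, since for small $d$ the symmetric group $S_d$ may well embed into the Clifford action (e.g.\ $X_1 X_2 X_3$ swaps $|000\rangle\leftrightarrow|111\rangle$, so a candidate two-element image $\{|000\rangle\langle 000|,|111\rangle\langle 111|\}$ in isolation would satisfy condition~3). Invoking the compositional structure of condition~4 is precisely what bypasses this: any non-trivial single-qubit decoherence already has a maximal two-element image, and its $n$-fold composite then necessarily has the full $2^n$-element product frame as image, where the affine-group obstruction applies. The only remaining loophole, a ``trivial'' single-qubit decoherence with one-element image (a constant map on normalized states), I would discard as incompatible with the intended non-degenerate classical-simplex structure of condition~1.
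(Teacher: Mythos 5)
Your proof is correct, and it reaches the same contradiction as the paper -- a classical reversible transformation on the decohered $3$-bit register that no Clifford unitary can implement, violating condition~3 of Definition~\ref{def:DecoherenceMaps} -- but by a noticeably different and more self-contained route. The paper's proof is a one-line citation: the control-control-NOT permutation is the Toffoli gate, which is not in the Clifford group~\cite{Gottesman_stab}, hence cannot be a stabilizer transformation. You instead prove the underlying structural fact yourself: any Clifford that permutes the computational basis (as density matrices) conjugates each $X_i$ to a Pauli, forcing $\pi(x+e_i)=\pi(x)+a^{(i)}$ and hence affinity of $\pi$ over $\mathbb{F}_2^n$, after which a transposition fixing six basis states (whose fixed-point set has size $6$, not a power of $2$, so it cannot be an affine subspace) serves as the witness. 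This buys you two things the paper glosses over. First, condition~3 only demands a Clifford that \emph{agrees with} the classical permutation on the decohered states, so ``Toffoli is not Clifford'' is strictly speaking not enough -- one must rule out that some other Clifford restricts to the same basis permutation, and your affine lemma does exactly that (it also covers the paper's CCNOT witness, whose permutation is quadratic rather than affine). Second, your use of condition~4 to force the decohered frame of the composite to be the full $2^n$-element product basis, closing the loophole of a small non-maximal image on which $S_d$ might embed in the Clifford action, is a step the paper leaves implicit. Two minor quibbles: $UX_iU^{-1}$ need not equal $X^{a^{(i)}}$ exactly (it may carry a $Z$-part and a phase), though only its $X$-part matters for the displacement relation you use; and the dismissal of the one-element-image ``decoherence'' could be tightened by noting that the paper's notion of an MCI-frame \emph{arising from decoherence} requires $D(A_+)$ to be the cone over the given ($d\geq 2$)-element frame, so degenerate images are irrelevant to the theorem being protected.
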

\begin{proof}
 By counterexample. 
Consider the classical $3$-bit control-control-NOT gate that flips the third bit only if the first two bits are in state 1, and otherwise does nothing. 
This corresponds to a Toffoli gate in the quantum circuit, which is {\em not} a member of the Clifford group~\cite{Gottesman_stab}, and hence not a valid quantum stabilizer transformation. 
This violates condition $3$ of \cref{def:DecoherenceMaps}: there is a classical reversible transformation that cannot be induced by a transformation in the theory.
\end{proof}

Analogously, there is a classical transformation that {cannot be implemented in STM as well}:
\begin{lemma}
\label{lemma:STMXDecoherence}
Spekkens' Toy Model does not admit a decoherence map.
\end{lemma}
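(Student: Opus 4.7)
The plan is to mirror the proof of \cref{lemma:StabilizerXDecoherence} and to exhibit a classical reversible transformation that cannot be realized as a valid STM transformation. The natural candidate, as in the stabilizer case, is the three-bit Toffoli (controlled-controlled-NOT) gate acting on the decohered classical subspace.

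First I would set up the decoherence target: For an $n$-bit STM-GPT system, the MCI-frame that plays the role of pointer basis consists of the $2^n$ tensor products $\ket{z\pm}\odot\cdots\odot\ket{z\pm}$. A hypothetical decoherence map $D$ would send $\Omega$ onto the convex hull of these $2^n$ frame states, producing a classical $2^n$-level system. Condition 3 of \cref{def:DecoherenceMaps} then demands that every reversible permutation of these $2^n$ states be implemented by some $T\in\mathcal{T}_A$. I would take $n=3$ and ask that the Toffoli permutation of the eight frame states be lifted to a valid STM transformation.

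Next I would invoke the stabilizer correspondence recalled in \cref{app:SpekkensStab}: valid reversible STM transformations on $n$ bits are in one-to-one correspondence with elements of the $n$-qubit Clifford group (modulo global irrelevancies), in the sense that the action on the toy stabilizer generators $\{X_i,Z_i\}$ is mirrored by the Clifford action on $\{\sigma_x^{(i)},\sigma_z^{(i)}\}$. Any STM transformation realizing Toffoli on $\{\ket{z\pm}\}^{\odot 3}$ would have to map the generators $Z_i$ to corresponding products compatible with the Toffoli action on computational-basis states, and the resulting transformation rule on stabilizers is exactly the one a Clifford unitary implementing Toffoli would have to satisfy. Since Toffoli is famously not in the Clifford group~\cite{Gottesman_stab}, no such STM transformation can exist, contradicting condition 3 of \cref{def:DecoherenceMaps}.

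The main obstacle I anticipate is justifying rigorously that the non-existence of a Clifford implementation of Toffoli really transfers to the non-existence of an STM implementation. Pusey's theorem is most cleanly stated as a sufficient condition (Clifford-compatible stabilizer maps yield STM transformations), so one must check the converse direction — that every STM ontic permutation preserving the epistemic restriction induces a Clifford action on the toy stabilizer group. This follows because STM transformations are exactly the ontic permutations that permute the valid epistemic states, and such permutations act on the (Abelian) toy stabilizer group $G$ by automorphisms compatible with the symplectic structure used by Pusey. Once this dictionary is in place, the Toffoli obstruction is immediate, and the lemma follows by a one-line contradiction argument in parallel with \cref{lemma:StabilizerXDecoherence}.
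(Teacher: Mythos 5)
Your overall strategy is sound — exhibiting a single classical reversible $3$-bit gate that condition 3 of \cref{def:DecoherenceMaps} would force STM to implement, and showing no valid STM transformation does so — but there is a genuine gap at exactly the step you flag as ``the main obstacle.'' The result of \citet{Pusey_Stabilizer} as used in this paper runs in one direction only: a stabilizer map that is realizable by a Clifford unitary yields a valid STM transformation. To conclude that the non-Cliffordness of the Toffoli rules out a toy Toffoli, you need the converse: that \emph{every} ontic permutation preserving the set of valid epistemic states induces an (affine symplectic / Clifford-compatible) action on the toy stabilizer group. You assert this in one sentence, but it is not proved in the paper, it is not an immediate consequence of the definitions, and it is precisely the nontrivial content your argument rests on. (Note also that the correspondence between STM transformations and Clifford operations is not a naive group isomorphism: as the footnote in \cref{app:STM_GPT} points out, the single-qubit Clifford rotation $\ket{y+}\mapsto\ket{x+}\mapsto\ket{y-}\mapsto\ket{x-}$ is \emph{not} a valid STM transformation, so one cannot simply identify the two transformation groups and must argue the containment you need explicitly.) Without that converse, the Clifford obstruction to Toffoli simply does not transfer, and your proof is incomplete.

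The paper's own proof sidesteps this entirely: it picks a different non-affine classical gate, the controlled-SWAP, and cites an exhaustive computer search \cite{Garner_DPhil} showing that no $3$-bit STM transformation implements it; this immediately violates condition 3 of \cref{def:DecoherenceMaps}. The contrast is instructive: the authors use the Toffoli/Clifford argument only for stabilizer quantum theory (\cref{lemma:StabilizerXDecoherence}), where non-membership in the Clifford group directly certifies non-membership in the theory's transformation group, and deliberately switch to a brute-force certificate for STM, where no such direct certification is available. To repair your proof you would either need to prove the converse correspondence (e.g.\ by characterizing the STM transformation group as the affine symplectic permutations of the ontic phase space $\mathbb{Z}_2^{2n}$ and noting that the Toffoli permutation is non-affine), or fall back on an exhaustive search as the paper does.
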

\begin{proof}
By counterexample. Consider the classical 3-bit transformation where bits 2 and 3 are swapped if bit 1 is set. 
This also forms a valid classical transformation in these bits. 
It is shown exhaustively by \citet{Garner_DPhil} that STM does not have a 3-bit controlled SWAP. 
This then amounts to a valid classical transformation that cannot be induced within STM, violating condition $3$ of \cref{def:DecoherenceMaps}.
\end{proof}
As we have argued in the main text, the ability to induce any reversible classical transformation on a frame's states (effects) is a necessary condition for strong symmetry on states (on effects) in a GPT system. Thus, we conclude:
\begin{lemma}
Neither stabilizer quantum theory nor Spekkens' toy model satisfy strong symmetry (on states or effects).
\end{lemma}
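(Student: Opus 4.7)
The plan is to leverage the counterexamples already used for the decoherence-map failures (\cref{lemma:StabilizerXDecoherence,lemma:STMXDecoherence}) together with the observation, made in the paragraph following \cref{def:StrongSymmetry}, that strong symmetry on states forces the realizability of \emph{every} classical reversible gate on the states of a suitable frame. So the argument I would write has essentially two moves: (i) extract from \cref{def:StrongSymmetry} a general lemma saying ``strong symmetry on states implies every permutation of frame elements is a reversible transformation of the GPT'', and (ii) exhibit, in each of the two theories, a particular permutation of frame elements that is \emph{not} realizable as a reversible transformation, thereby contradicting (i).

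For step (i), suppose $\{\omega_1,\ldots,\omega_n\}$ is a frame and $\pi \in S_n$ is any permutation. Then $\{\omega_{\pi(1)},\ldots,\omega_{\pi(n)}\}$ is also a frame (perfectly distinguished by the permuted refined measurement), so strong symmetry on states supplies a reversible $T\in\mathcal{T}$ with $T\omega_j=\omega_{\pi(j)}$. Hence the orbit of the frame under $\mathcal{T}$ contains the entire symmetric group's action on it. Specialising to an MCI-frame on a composite system of $n=2^k$ or $n=4^k$ pure product computational states then tells us that every classical reversible gate on $k$ bits/STM-bits must be realisable as an element of $\mathcal{T}$ acting on those product states.

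For step (ii), in stabilizer quantum theory take the three-qubit computational basis as an MCI-frame. By the argument just above, if strong symmetry on states held, then the Toffoli permutation $\lvert a,b,c\rangle\mapsto\lvert a,b,c\oplus ab\rangle$ would extend to a unitary in the reversible-transformation group of the stabilizer theory, i.e.\ to a Clifford operation; but Toffoli is famously not Clifford, contradicting strong symmetry -- this is precisely the obstruction exploited in \cref{lemma:StabilizerXDecoherence}. Likewise, in STM we take the three-STM-bit ontic basis as an MCI-frame; strong symmetry on states would then force the controlled-SWAP permutation on the ontic variables to lift to an allowed STM transformation, contradicting the exhaustive result of \citet{Garner_DPhil} cited in \cref{lemma:STMXDecoherence}. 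This handles both theories for the states-version.

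For strong symmetry on effects, I would invoke the self-duality built into both theories (explicit for STM in \cref{app:STM_GPT}, where the state and effect spaces coincide under the rescaled Euclidean inner product; for the stabilizer subtheory, the same holds since stabilizer projectors are in natural bijection with stabilizer pure states). Under this self-duality, a refined distinguishing measurement $\{e_1,\ldots,e_n\}$ corresponds to the dual frame $\{\omega_1,\ldots,\omega_n\}$, and $e_j\circ T = f_j$ translates exactly into $T^{-1}\omega_j = \nu_j$ at the level of states. Hence strong symmetry on effects would imply strong symmetry on states (at least for the classes of frames arising from MCI-frames of product systems), and the same Toffoli/controlled-SWAP obstructions kill it. The main technical subtlety, and the only place I expect a careful bookkeeping step, will be making the self-duality rigorous enough that the effect-statement really does reduce to the state-statement; but since both theories are constructed to be self-dual from the outset, this reduction should go through cleanly, completing the proof.
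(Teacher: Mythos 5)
Your proposal is correct and follows essentially the same route as the paper: strong symmetry would force every permutation of a (product) frame to be realizable as a reversible transformation, and the Toffoli (non-Clifford) and controlled-SWAP (ruled out exhaustively in \citet{Garner_DPhil}) permutations from the decoherence-map counterexamples contradict this. The only cosmetic difference is that the paper handles the effects case by directly invoking the dual form of the same necessary condition, whereas you reduce it to the state case via self-duality; both amount to the same obstruction.
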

\end{document}